\newcommand*{\medcap}{\mathbin{\scalebox{0.8}{\ensuremath{\bigcap}}}}%
\newcommand*{\medcup}{\mathbin{\scalebox{0.8}{\ensuremath{\bigcup}}}}%
\newtheorem{theorem}{Theorem}
\newtheorem{remark}{Remark}
\newtheorem{proposition}{Proposition}
\def\BibTeX{{\rm B\kern-.05em{\sc i\kern-.025em b}\kern-.08em
    T\kern-.1667em\lower.7ex\hbox{E}\kern-.125emX}}
\begin{document}

\title{Energy-efficient predictive control for connected, automated driving under localization uncertainty}

\author{Eunhyek Joa \qquad Eric Yongkeun Choi \qquad Francesco Borrelli ~\IEEEmembership{Fellow,~IEEE,}
\thanks{The authors are with the Model Predictive Control Lab,
Department of Mechanical Engineering, University of California at Berkeley (e-mail: e.joa@berkeley.edu; yk90@berkeley.edu; fborrelli@berkeley.edu).}
}

\markboth{Journal of \LaTeX\ Class Files,~Vol.~14, No.~8, August~2021}%
{Shell \MakeLowercase{\textit{et al.}}: A Sample Article Using IEEEtran.cls for IEEE Journals}

\IEEEpubid{0000--0000/00\$00.00~\copyright~2021 IEEE}

\maketitle

\begin{abstract}
This paper presents a data-driven Model Predictive Control (MPC) for energy-efficient urban road driving for connected, automated vehicles. The proposed MPC aims to minimize total energy consumption by controlling the vehicle's longitudinal motion on roads with traffic lights and front vehicles. Its terminal cost function and terminal constraints are learned from data, which consists of the closed-loop state and input trajectories. The terminal cost function represents the remaining energy-to-spend starting from a given terminal state. The terminal constraints are designed to ensure that the controlled vehicle timely crosses the upcoming traffic light, adheres to traffic laws, and accounts for the front vehicles. We validate the effectiveness of our method through both simulations and \textcolor{black}{vehicle-in-the-loop experiments}, demonstrating $\textbf{19\%}$ improvement in average energy \textcolor{black}{efficiency} compared to conventional approaches that involve solving a long-horizon optimal control problem for speed planning and employing a separate controller for speed tracking.
\end{abstract}


\section{Introduction}
\IEEEPARstart{E}{xtensive} research has been conducted to study how Connected and Automated Vehicles (CAVs) can improve our daily life driving experiences, covering areas from fleet dispatching \cite{qiu2023dispatching} and routing \cite{mostafizi2022routing, guanetti2019routing} to urban driving \cite{guo2019urban} and parking \cite{wang2021parking} scenarios.
The benefits of CAV technology include improved traffic safety, better road utilization, and reduced energy consumption and have been well demonstrated across various driving scenarios \cite{ELLIOTT2019Recent, guanetti2018CAVs, karbasi2022Investigating}.
In this paper, we focus on the potential of CAV technology to enhance energy efficiency for urban road driving.


Existing literature primarily explores optimization-based algorithms to minimize energy consumption while also prioritizing safety, compliance with traffic laws, and driving comfort.
They have suggested a hierarchical control architecture to manage this complexity, dividing responsibilities between an upper layer that plans the vehicle's future trajectory and a lower layer that executes the planned trajectory through direct vehicle control.
The upper layer has been designed using various optimization control methods.
In \cite{deshpande2022DP, bae2019real}, a Dynamic Programming (DP) method is used for the upper layer planning algorithms, with Model Predictive Control (MPC) being employed at the lower layer.
Pontryagin's Minimum Principle (PMP) method has been used in \cite{ard2023VILCAV, han2023energy}, and convex optimization is used in \cite{bo2023MPC, chada2020ecological} for their planning algorithms.
These studies often assume ideal conditions, such as empty roads at the planning stage, forcing the lower layer to deviate from the vehicle's trajectory in real-time based on immediate conditions, such as the behavior of the vehicle ahead.
However, deviations from the planned trajectory can lead to unforeseen consequences.
For instance, the ego vehicle may fail to pass a traffic light as planned due to unforeseen traffic ahead that was not accounted for at the planning stage.
Moreover, the inherent drawbacks of having two separate control layers include potential delays and latency, which can lead to spatial and temporal misalignments and, consequently, discrepancies in the control layers' decisions.
Therefore, we propose a unified control architecture with a single MPC that directly computes longitudinal acceleration while ensuring safety.



This paper aims to address key practical implementation issues associated with CAVs. Achieving high-precision localization typically requires expensive technologies like differential GPS or LiDAR. Recognizing the cost constraints linked to the imminent deployment of CAVs, our proposed controller considers localization uncertainty. Existing research often assumes exact vehicle positioning, an assumption that may not be viable with the forthcoming wave of CAV technologies that utilize less accurate localization methods. To address these issues, the proposed MPC is a robust MPC with respect to localization uncertainties. Additionally, the proposed MPC minimizes the expected energy consumption to maintain energy efficiency under these conditions.
\IEEEpubidadjcol

In our conference paper \cite{joa2024eco}, we proposed a data-driven MPC that solves the given problem under free-flow conditions and experimentally validates the algorithm under one scenario. In this paper, we extend the algorithm to handle urban road scenarios with a front vehicle and investigate the proposed algorithm with multiple scenarios in simulations and vehicle experiments.
Our contributions are summarized as follows:
\begin{itemize}
    \item  We propose a novel data-driven MPC to solve an energy-efficient urban road driving problem for connected, automated vehicles considering localization uncertainty.
    \item We propose a unified MPC controller to replace the conventional hierarchical approach, effectively mitigating compound errors and latency issues.
    \item We experimentally demonstrate the energy saving of the proposed algorithm through \textcolor{black}{vehicle-in-the-loop tests} where the actual test vehicle is controlled to complete a given route within a user-defined duration under localization uncertainty while interacting with a virtual front vehicle and deterministic traffic lights behavior.
\end{itemize}

\textit{Notation:} 
Throughout the paper, we use the following notation.
$\mathbf{0}^{n \times m}$ represents an $n$-by-$m$ zero matrix.
The positive semi-definite matrix \(P\) is denoted as \(P\succeq0\).
The Minkowski sum of two sets is denoted as \(\mathcal{X} \oplus \mathcal{Y} = \{x+y: x \in \mathcal{X}, y \in \mathcal{Y}\}\).
The Pontryagin difference between two sets is defined as \(\mathcal{X} \ominus \mathcal{Y} = \{x: x+y \in \mathcal{X}, \forall y \in \mathcal{Y}\}\).
The m-th column vector of a matrix \(H\) is denoted as \([H]_m\). The m-th component of a vector \(h\) is \([h]_m\). 
The notation \(x_{l:m}\) means the sequence of the variable \(x\) from time step \(l\) to time step \(m\).

\section{Problem Setup} \label{sec: problem setup}
In this section, we introduce the problem setup of energy-efficient urban driving for connected, automated vehicles under localization uncertainty.
We make the following assumptions.
\begin{itemize}
    \item \textbf{Route Availability:} We assume the existence of a predefined route from a starting point to a goal point, which is typical for automated vehicles.
    
    \item \textbf{Deterministic Traffic Lights:} We assume that all traffic light cycles are deterministic. This assumption is valid when adaptive traffic control systems are either not equipped or deactivated. If traffic light cycles vary, their stochastic nature can be considered, as discussed in \cite{sun2020optimal}.
    
    \item \textbf{V2I Communication Infrastructure:} The infrastructure provides essential data such as distance and traffic flow speed between consecutive traffic lights through V2I communication. This information enables the calculation of the time required to traverse a traffic light from the previous traffic light. Such data aligns with the V2X communication standard in the US, as detailed in \cite{spatstandard}. \textcolor{black}{Additionally, we assume that V2I communication error is negligible, i.e., no packet loss and latency. }
    
    \item \textbf{Negligible Road Grade Impact:} We assume that the road grade along the route is sufficiently small to disregard gravitational potential energy. If the road grade is known, the associated gravitational potential energy could be factored into the terminal cost as discussed in Sec. \ref{sec: learning V function}.
\end{itemize}



\subsection{Vehicle Model, Measurement Model, and Observer}
We model vehicle longitudinal dynamics as a double integrator:
\begin{equation} \label{eq: longitudinal dynamics}
\begin{aligned}
    & \mathbf{x}(t) = \begin{bmatrix} s(t) & v_x(t)\end{bmatrix}^\top, ~ u(t) = a_x(t), \\
    & \Dot{\mathbf{x}}(t) = \begin{bmatrix} 0 & 1\\ 0 & 0\end{bmatrix}\mathbf{x}(t) + \begin{bmatrix} 0 \\ 1 \end{bmatrix}u(t),
\end{aligned}
\end{equation}
where $s$ represents a longitudinal position along the centerline of the given route, and $v_x$ and $a_x$ are the ego vehicle's speed and acceleration, respectively.
Throughout the paper, we will refer to $s$ as the position for brevity. 
Forces due to road grade, air drag, and rolling resistance are not included in \eqref{eq: longitudinal dynamics} because $a_x$ is net longitudinal acceleration. 
We regard controlling the vehicle under those forces as the task of the actuator-level controller.
We discretize the dynamics \eqref{eq: longitudinal dynamics} as
\begin{equation} \label{eq:system}
\begin{aligned}
    & \mathbf{x}_{k}=\begin{bmatrix} s_k & v_{x,k} \end{bmatrix}^\top, ~ u_{k} = a_{x,k}, \\
    & \mathbf{x}_{k+1} = \mathbf{A} \mathbf{x}_{k} + \mathbf{B} u_{k},
\end{aligned}
\end{equation}
where $\mathbf{x}_{k}$ denotes the state, and $u_{k}$ denotes the input at time step $k$. $s_k$ represents the position, while $v_x$ and $a_x$ correspond to the vehicle's longitudinal speed and acceleration at time step $k$. The discretization sampling time $T_s$ is 1 sec.

At time step $k$, we measure the states from sensors as:
\begin{equation} \label{eq: sensor measurements}
\begin{aligned}
    & \mathbf{y}_k = \mathbf{C} \mathbf{x}_{k} + \mathbf{D} w_k = \begin{bmatrix} 1 & 0 \\ 0 & 1 \end{bmatrix}\mathbf{x}_{k} + \begin{bmatrix} 1 \\ 0 \end{bmatrix}w_k,
\end{aligned}
\end{equation}
where $\mathbf{y}_k$ denotes a measurement at time step $k$, which consists of the position $s_k$ and the speed $v_{x,k}$. The position $s_k$ is measured by a localization module, and $w_k$ represents its localization uncertainty. The speed $v_{x,k}$ is measured by vehicle wheel encoders.
We assume the measurement noises of the vehicle speed are small enough to be neglected.

We assume that the longitudinal localization uncertainty $w$ is a random variable with distribution $p_w(w)$ and contained in a segment $\mathcal{W}$ as follows:
\begin{equation} \label{eq: gps error}
\begin{aligned}
    & w_k \sim p_w(w), ~ w_k \in \mathcal{W}.
\end{aligned}
\end{equation}
Let $w_\text{min}$ and $w_\text{max}$ denote the extremes of the segment $\mathcal{W}$. As the $\mathcal{W} \subset \mathbb{R}$, $\mathcal{W}$ can be written as $\{w ~|~ w_\text{min} \leq w \leq w_\text{max} \}$.
In practice, the localization uncertainty may be designed as a Gaussian distribution, which is not bounded. In this case, a high confidence interval can be used to approximate $\mathcal{W}$.

We design a discrete state observer \cite{ogata2010modern} with a predictor to estimate the state $\mathbf{x}_{k}$ as follows:
\begin{equation} \label{eq: observer equation}
\begin{aligned}
    & \hat{\mathbf{x}}_{0} = \mathbf{y}_0, \\
    & \hat{\mathbf{x}}_{k+1} = \mathbf{A} \hat{\mathbf{x}}_{k} + \mathbf{B} u_{k} + \begin{bmatrix} L & 0 \\ 0 & 1 \end{bmatrix}\biggr(\mathbf{y}_{k+1} - \mathbf{C} ( \mathbf{A} \hat{\mathbf{x}}_{k} + \mathbf{B} u_{k})\biggr), \\
    & ~~~~~~ =  \mathbf{A} \hat{\mathbf{x}}_{k} + \mathbf{B} u_{k} + \mathbf{D}n_k,
\end{aligned}
\end{equation}
where $\hat{\mathbf{x}}_{k}=\begin{bmatrix} \hat{s}_{k} & \hat{v}_{x,k} \end{bmatrix}^\top$ is an estimated state, $L$ is an observer gain, and $n_k = L(s_{k} -\hat{s}_{k}) + Lw_{k+1}$, which is a lumped noise. Note that as the speed measurement is accurate, the observer \eqref{eq: observer equation} is designed to set the current speed estimate to the current speed measurement, i.e., $\hat{v}_{x,k} = \begin{bmatrix} 0 & 1\end{bmatrix}y_k$. On the other hand, as the position measurement is not accurate, the observer \eqref{eq: observer equation} is designed to suppress the localization uncertainty.

\begin{proposition} \label{prop: delta s and lumped noise}
    \cite[Prop. 1]{joa2024eco} Let $\Delta s_k = s_{k} -\hat{s}_{k}$. Then, $\Delta s_k \in \mathcal{W}$ and $n_k \in 2L\mathcal{W}$ for all realizations of noise that satisfies \eqref{eq: gps error}, i.e., $\forall w_k \sim p_w(w), ~ w_k \in \mathcal{W}$.
\end{proposition}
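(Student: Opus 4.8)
The plan is to work with the estimation error $\mathbf{e}_k = \mathbf{x}_k - \hat{\mathbf{x}}_k$ and reduce the proposition to a one-dimensional invariance argument on its position component. Subtracting the observer recursion in \eqref{eq: observer equation} (in the compact form $\hat{\mathbf{x}}_{k+1} = \mathbf{A}\hat{\mathbf{x}}_k + \mathbf{B}u_k + \mathbf{D}n_k$) from the true dynamics \eqref{eq:system} gives $\mathbf{e}_{k+1} = \mathbf{A}\mathbf{e}_k - \mathbf{D}n_k$. Because the speed channel is measured without noise and the observer is designed so that $\hat{v}_{x,k} = v_{x,k}$, the velocity component of the error vanishes for every $k$, so $\mathbf{e}_k = \begin{bmatrix}\Delta s_k & 0\end{bmatrix}^\top$. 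Reading off the first component with $n_k = L\Delta s_k + Lw_{k+1}$ collapses the vector recursion to the scalar relation $\Delta s_{k+1} = (1-L)\Delta s_k - Lw_{k+1}$, while the initialization $\hat{\mathbf{x}}_0 = \mathbf{y}_0 = \mathbf{C}\mathbf{x}_0 + \mathbf{D}w_0$ yields $\Delta s_0 = -w_0$.

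Next I would prove $\Delta s_k \in \mathcal{W}$ by induction on $k$. For the base case, $\Delta s_0 = -w_0$ lies in $\mathcal{W}$ because $\mathcal{W}$ is a segment symmetric about the origin (consistent with modeling $w$ by the high-confidence interval of a zero-mean distribution), so $\mathcal{W} = -\mathcal{W}$ and $w_0 \in \mathcal{W}$ implies $-w_0 \in \mathcal{W}$. For the inductive step, I would rewrite the scalar recursion as $\Delta s_{k+1} = (1-L)\Delta s_k + L(-w_{k+1})$, a convex combination of $\Delta s_k$ and $-w_{k+1}$ whenever $0 \le L \le 1$. Under the induction hypothesis $\Delta s_k \in \mathcal{W}$, together with $-w_{k+1} \in \mathcal{W}$ (symmetry) and convexity of the interval $\mathcal{W}$, the combination stays in $\mathcal{W}$, closing the induction.

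The bound on the lumped noise then follows directly: since $\Delta s_k \in \mathcal{W}$ and $w_{k+1} \in \mathcal{W}$, the identity $n_k = L\Delta s_k + Lw_{k+1}$ places $n_k$ in $L\mathcal{W} \oplus L\mathcal{W}$; because $\mathcal{W}$ is convex and $L \ge 0$, the scaled Minkowski sum simplifies as $L\mathcal{W} \oplus L\mathcal{W} = 2L\mathcal{W}$, giving $n_k \in 2L\mathcal{W}$ for every admissible noise realization.

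I expect the main obstacle to be pinning down the two standing conditions that the argument silently relies on, rather than the algebra. The base case $\Delta s_0 = -w_0 \in \mathcal{W}$ and the inductive step both require $\mathcal{W} = -\mathcal{W}$, and the convex-combination step requires the gain to satisfy $0 \le L \le 1$ (which also makes the scalar error map contractive and the observer stable). For a non-symmetric uncertainty segment or an aggressive gain outside $[0,1]$, the invariance $\Delta s_k \in \mathcal{W}$ can be lost, so I would verify explicitly that the intended localization model and observer tuning meet these requirements before asserting the result.
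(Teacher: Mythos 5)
Your proof is correct and takes essentially the same route as the paper's: the paper likewise derives the scalar error recursion $\Delta s_0 = -w_0$, $\Delta s_{k+1} = (1-L)\Delta s_k - Lw_{k+1}$, proves $\Delta s_k \in \mathcal{W}$ by induction using $(1-L)\mathcal{W} \oplus L\mathcal{W} = \mathcal{W}$, and then concludes $n_k = L\Delta s_k + Lw_{k+1} \in L\mathcal{W} \oplus L\mathcal{W} = 2L\mathcal{W}$. The two standing conditions you flag---symmetry $\mathcal{W} = -\mathcal{W}$ (needed for $-w_0 \in \mathcal{W}$ and $-Lw_{k+1} \in L\mathcal{W}$) and $0 \le L \le 1$ (needed for the Minkowski-sum identity, and satisfied by the paper's choice $L = \tfrac{1}{4N}$)---are used silently in the paper's own argument, so making them explicit is a refinement of the same proof rather than a different one.
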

\begin{proof}
See \cite[Prop. 1]{joa2024eco}
\end{proof}
From Proposition \ref{prop: delta s and lumped noise}, the lumped noise $n_k$ in \eqref{eq: observer equation} is a random variable with bounded support $2L\mathcal{W}$. 
Let $p_n(n)$ denote the corresponding probability density function.
Then, the lumped noise $n_k$ is a random variable with the probability density function $p_n(n)$ and bounded support $2L\mathcal{W}$:
\begin{equation} \label{eq: lumped noise}
\begin{aligned}
    & n_k \sim p_n(n), ~ n_k \in 2L\mathcal{W}.
\end{aligned}
\end{equation}

\subsection{Energy Model}
We define the energy $E$ as the sum of the energy stored in a battery/fuel tank and the kinetic energy $E_\text{kin}$ as:
\begin{equation} \label{eq: energy definition}
\begin{aligned}
    & E(t) = E_\text{stor}(t) + E_\text{kin}(t).
\end{aligned}
\end{equation}
The energy consumption at time step $k$ is defined as the change in energy between time step $k$ and $k+1$ as follows:
\begin{equation} \label{eq: power definition}
\begin{aligned}
    & \Delta E_k = E(kT_s) - E((k+1)T_s).
\end{aligned}
\end{equation}

We model $\Delta E_k$ \eqref{eq: power definition} as a parameterized function $\ell(\mathbf{x}_k, u_k)$ which is defined as follows:
\begin{equation}
    \begin{aligned}
        & \ell(\mathbf{x}_k, u_k) = \begin{bmatrix} v_{x,k} & u_k & 1 \end{bmatrix} \mathbf{P} \begin{bmatrix} v_{x,k} \\ u_k \\ 1 \end{bmatrix}, \\
    \end{aligned}
\end{equation}
where $\mathbf{P}$ is a symmetric matrix in $\mathbb{R}^{3\times3}$, which includes parameters. 
Additionally, $\mathbf{P}$ is positive semidefinite, ensuring that the parametrized function remains nonnegative, in accordance with the nonnegative nature of energy consumption in \eqref{eq: power definition}.
To obtain the parameters in $\mathbf{P}$, we collect the energy consumption data from our test vehicle during urban road driving. After collecting data, we solve the following regression problem to obtain the parameters in $\mathbf{P}$.:
\begin{equation} \label{eq: energy cost regression}
\begin{aligned}
   & \min_{\mathbf{P} \in \mathbb{R}^{3\times3}} \sum_{k=0}^{T_\text{data}} \lVert \Delta E_k - \ell(\mathbf{x}_k, u_k) \rVert_2^2 \\
   & ~~~\text{s.t.,} ~~\mathbf{P} \succeq 0,
\end{aligned}
\end{equation} 
where $T_\text{data}$ is the end time of the data.
This is a semidefinite programming that can be solved with convex optimization solvers. 
It is noteworthy that our parameterized cost does not depend on the position whose actual value cannot be obtained.
Thus, the following holds:
\begin{equation} \label{eq: CE stage cost}
\begin{aligned}
    & \ell(\mathbf{x}_k, u_k) = \ell(\hat{\mathbf{x}}_k, u_k).
\end{aligned}
\end{equation}

\subsection{Prediction Module of the Front Vehicle}
Let $s^{pv}_k$ and $v^{pv}_k$ denote the longitudinal position and speed of the front vehicle at time step $k$, respectively. 
$d_k$ denotes the distance between the front vehicle and the ego vehicle, which can be derived as $s^{pv}_k - s_k$.
At time step $k$, $d_k$ and $v^{pv}_k$ are measured by proximity sensors such as radars.
We assume that the measurement noises of these sensors are negligible.

In this paper, we assume that there is a given prediction module that predicts the future behavior of the front vehicle.
Let $\hat{s}^{pv}_{i|k}$ and $\hat{v}^{pv}_{i|k}$ respectively denote the front vehicle's position and speed at time step $k+i$ predicted at time step $k$.
The prediction module computes a $T_p$-step predicted state sequence for $s^{pv}$ and $v^{pv}$ respectively denoted by $\{\hat{s}^{pv}_{i|k}\}_{i=0}^{T_p}$ and $\{\hat{v}^{pv}_{i|k}\}_{i=0}^{T_p}$, with initial conditions $\hat{s}^{pv}_{0|k}=\hat{s}_k + d_k$ and $\hat{v}^{pv}_{0|k} = v^{pv}_k$.


In Sec. \ref{sec: learning constraints}, we need the predicted behavior of the front vehicle beyond the maximum prediction horizon $T_p$. In this case, we extrapolate the prediction using a constant speed assumption as follows:
\begin{equation}
\begin{aligned}
    & \hat{s}^{pv}_{i+1|k} = \hat{s}^{pv}_{i|k} + \hat{v}^{pv}_{T_p|k}T_s,\\
    & \hat{v}^{pv}_{i|k} = \hat{v}^{pv}_{T_p|k}, \\
    & \forall i \geq T_p.
\end{aligned}
\end{equation}

In the literature, various methodologies have been employed to analyze the behavior prediction of the front vehicle. These approaches include scenario-based \cite{jeong2019target}, physics-based \cite{lefkopoulos2020interaction}, pattern-based \cite{nayakanti2023wayformer}, and planning-based \cite{wang2021game}. Interested readers can refer to \cite{karle2022scenario}.

\subsection{Collision Avoidance Constraints}
To avoid a collision with the front vehicle, the system \eqref{eq:system} is subject to the following constraints:
\begin{equation} \label{eq: collision avoidance constraint}
    \begin{aligned}
        & \mathbf{x}_k  \in \mathcal{C}(s^{pv}_k,v^{pv}_k) \\
        & \qquad = \{\mathbf{x}_k | s^{pv}_k + v^{pv}_k \cdot \mathrm{TTC} \geq \begin{bmatrix} 1 & \mathrm{TTC} \end{bmatrix}\mathbf{x}_k + d_0 \},
    \end{aligned}
\end{equation}
where $d_0$ denotes the minimum distance considering the size of vehicles, and $\mathrm{TTC}$ denotes time-to-collision (TTC) \cite{moon2009design}.
This constraint \eqref{eq: collision avoidance constraint} imposes that the distance between the ego vehicle and the front vehicle maintains at least a $1$ second TTC gap.
Throughout the paper, $d_0 = 5 \text{m}$ and $\mathrm{TTC}= 1 \text{s}$.


\subsection{State and Input Constraints}
The system (\ref{eq:system}) is subject to the following constraints: 
\begin{equation} \label{eq:constraints}
\begin{aligned}
    & \mathbf{x}_{k} \in \mathcal{X} = \{\mathbf{x}_{k} ~|~ 0 \leq \begin{bmatrix} 0 & 1 \end{bmatrix} \mathbf{x}_{k} \leq v_{x, \max} \}, \\
    & u_k \in \mathcal{U} = \{u ~|~ a_{x, \min} \leq u \leq a_{x, \max} \}, \\
    & \forall k \geq 0, ~\forall w_k \in \mathcal{W},
\end{aligned} 
\end{equation}
where $v_{x, \max}$ is the maximum allowable speed, and $a_{x, \min / \max}$ is the minimum/maximum acceleration.
\(\mathcal{X}\) is a convex polyhedron, and \(\mathcal{U}\) is a polytope.
Note that as the state constraints are imposed on the vehicle speed, $\mathbf{x}_{k} \in \mathcal{X}$ is identical to $\hat{\mathbf{x}}_{k} \in \mathcal{X}$.

\subsection{Traffic Light Crossing Time Constraints} \label{sec: tlctc}
We impose traffic light crossing time constraints on the system \eqref{eq:system} to ensure that the vehicle crosses each traffic light within a predetermined time step. This time step is chosen to avoid disrupting traffic flow by the controlled vehicle.
To find such a time step, we utilize a high-level green wave search module from \cite{highlevelgreenwave1} with one modification: the cost function is adjusted to be quadratic, penalizing a deviation between the traffic flow speed, which is available through V2I, and the average planned speed for each road segment. 
\begin{remark}
    According to \cite[Table 1]{sciarretta2015optimal}, reducing vehicle speed is correlated with lower energy consumption. Consequently, in the absence of constraints mandating adherence to traffic flow speed, the most energy-efficient approach is to maintain a low speed. However, this practice can adversely affect traffic flow, which is socially unacceptable. To mitigate this issue, we propose implementing traffic light crossing time constraints.
\end{remark}

Let $T_l$ denote the predetermined time step required to cross the $l$-th traffic light positioned at $s_{\mathrm{tl},l}$. 
The system (\ref{eq:system}) is subject to the following constraints:
\begin{equation} \label{eq:TL constraint}
    \begin{aligned}
        & \mathbf{x}_{k} \in \mathcal{T}_l = \{\mathbf{x}_{k} ~|~ s_{\mathrm{tl},l} \leq \begin{bmatrix} 1 & 0\end{bmatrix} \mathbf{x}_{k} \}, ~ \forall k \geq T_l, \\
        & l \in \{1,\cdots,N_\mathrm{tl}\},
    \end{aligned}
\end{equation}
where $N_\mathrm{tl}$ denotes the number of traffic lights on the given route.
Note that this constraint \eqref{eq:TL constraint} does not consider traffic light rules; therefore, we need the associated traffic light rule constraints described in the following section.

\subsection{Traffic Light Rule Constraints}
We impose traffic light rule constraints to ensure the vehicle does not run a red light. 
To do that, given state $\mathbf{x}_k$, we find the nearest upcoming traffic light along the given route as follows:
\begin{equation} \label{eq: s_tl(x)}
\begin{aligned}
    & \min_l ~ \lVert \begin{bmatrix} 1 & 0 \end{bmatrix}\mathbf{x}_k  - s_{\mathrm{tl},l}\rVert \\
    & \, ~ \text{s.t.,} ~ \begin{bmatrix} 1 & 0 \end{bmatrix}\mathbf{x}_k \leq s_{\mathrm{tl},l}, \\
    & \,~ ~~~~~~  l \in \{1,\cdots,N_\mathrm{tl}\}.
\end{aligned}    
\end{equation}
After solving the problem \eqref{eq: s_tl(x)}, we obtain the index of the nearest upcoming traffic light $l^\star(\mathbf{x}_k)$.
Let $c_{l,k}$ denote the traffic light phase of the $l$-th traffic light at time step $k$.
Then, if $c_{l^\star(\mathbf{x}_k),k}$ is red, the traffic light rule constraint is imposed and written as follows:
\begin{equation} \label{eq: light rule constraint}
    \begin{aligned}
        & \mathbf{x}_k  \in \mathcal{L}(s_{\mathrm{tl},l^\star(\mathbf{x}_k)}) = \{\mathbf{x}_k ~|~ \begin{bmatrix} 1 & 0 \end{bmatrix} \mathbf{x}_k \leq s_{\mathrm{tl},l^\star(\mathbf{x}_k)}\}.
    \end{aligned}
\end{equation}

\subsection{Energy-efficient Urban Road Driving for Connected, Automated Vehicles under Localization Uncertainty}
The urban driving problem for connected, automated vehicles, aimed at minimizing energy consumption while accounting for localization uncertainty, is a stochastic optimization problem due to stochastic localization uncertainty \eqref{eq: gps error}.
Specifically, it can be formulated as follows:
\begin{equation} \label{eq:ftocp}
\begin{aligned}
    & J^{\star}(\mathbf{x}_S, \{\hat{s}^{pv}_{i|0}\}_{i=0}^{T_{N_\mathrm{tl}}}, \{\hat{v}^{pv}_{i|0}\}_{i=0}^{T_{N_\mathrm{tl}}}) = \\
    & \min_{\Pi(\cdot)} ~\,\mathbb{E}_{w_{0:T_{N_\mathrm{tl}}}}\Biggr[\sum_{k=0}^{T_{N_\mathrm{tl}}} \ell(\hat{\mathbf{x}}_k, ~ \pi_k(\hat{\mathbf{x}}_k)) \Biggr] \\
    & ~ \textnormal{s.t.,} ~~ \mathbf{x}_{k+1}  = \mathbf{A} \mathbf{x}_k  + \mathbf{B} \pi_k (\hat{\mathbf{x}}_k), \\
    & \qquad ~ \mathbf{y}_{k}  = \mathbf{C} \mathbf{x}_k  + \mathbf{D}w_k , ~ w_k  \sim p_w(w), \\
    & \qquad ~ \hat{\mathbf{x}}_{k+1} =\mathbf{A} \hat{\mathbf{x}}_{k} + \mathbf{B} \pi_k (\hat{\mathbf{x}}_k) \\
    & \qquad \qquad ~~~~ + \begin{bmatrix} L & 0 \\ 0 & 1 \end{bmatrix}\biggr(\mathbf{y}_{k+1} - \mathbf{C} ( \mathbf{A} \hat{\mathbf{x}}_{k} + \mathbf{B} \pi_k (\hat{\mathbf{x}}_k))\biggr), \\
    & \qquad ~ \mathbf{x}_0 = \mathbf{x}_S, ~ \hat{\mathbf{x}}_{0} = \mathbf{y}_{0}, \\
    & \qquad ~ \mathbf{x}_{k} \in \mathcal{X}, ~\forall w_k \in \mathcal{W}, \\
    & \qquad ~ \mathbf{x}_{k} \in \mathcal{T}_l, ~\forall k \geq T_l, ~\forall w_k \in \mathcal{W}, \\
    & \qquad ~ \mathbf{x}_k  \in \mathcal{L}(s_{\mathrm{tl},l^\star(\mathbf{x}_k)}), \text{if } c_{l^\star(\mathbf{x}_k),k}=\text{red}, \forall w_k \in \mathcal{W},  \\
    & \qquad ~ \hat{\mathbf{x}}_{k} \in \mathcal{C}(\hat{s}^{pv}_{k|0},\hat{v}^{pv}_{k|0}),  ~ \pi_k(\hat{\mathbf{x}}_{k}) \in \mathcal{U},  ~\forall w_k \in \mathcal{W}, \\
    & \qquad ~ k \in \{0,...,T_{N_\mathrm{tl}}-1\}, ~ l \in \{1,\cdots,N_\mathrm{tl}\}, 
\end{aligned}
\end{equation}
where $T_{N_\mathrm{tl}}$ is the task horizon, which denotes the predetermined time step to cross the last traffic light.
The cost function is an expected sum of the regressed energy consumption stage cost $\ell(\cdot,\cdot)$ in \eqref{eq: energy cost regression} evaluated for the estimated state trajectory. From \eqref{eq: CE stage cost}, this cost is identical to the cost evaluated for the actual state trajectory.
We point out that as the system \eqref{eq: observer equation} is uncertain, the optimal control problem (\ref{eq:ftocp}) consists of finding state feedback policies \(\Pi(\cdot) = \{\pi_0(\cdot),\pi_1(\cdot),...,\pi_{T_f-1}(\cdot)\} \). $\hat{\mathbf{x}}_{k}$ is the argument of the control input policy $\pi_k(\cdot)$. 


\section{Solution approach to problem \eqref{eq:ftocp}} \label{sec: solution approach}
There are three challenges associated with solving \eqref{eq:ftocp}: 
\begin{enumerate}[(C1)]
    \item Optimizing over control policies \(\Pi(\cdot)\) presents an infinite dimensional optimization problem.
    \item The computational demands for solving \eqref{eq:ftocp} become significant when
    $T_{N_\mathrm{tl}}\gg0$.
    \item Minimizing the expected cost in \eqref{eq:ftocp} involves a multivariate integral.
\end{enumerate}

To address (C1), we approximate the control policy by a constant control input, i.e., $\pi_k(\cdot) = u_k$.

To address (C2), we adopt two strategies.
First, we divide the original problem in \eqref{eq:ftocp} into small sub-problems of crossing only the upcoming traffic light within a predetermined time at each time step and solve each sub-problem until the vehicle reaches the end of the predetermined route.
For brevity, let $s_\mathrm{tl}=s_{\mathrm{tl},l^\star(\hat{\mathbf{x}}_k)}$ and $l^\star= l^\star(\hat{\mathbf{x}}_k)$ as we now deal with a single, upcoming traffic light.
Second, we solve a simpler constrained OCP with prediction horizon $N \ll T_{N_\mathrm{tl}}$ in a receding horizon fashion.
Specifically, we design an MPC controller of the following form:
\begin{equation} \label{eq:MPC}
\begin{aligned}
    & J_\text{MPC} (\hat{\mathbf{x}}_{k}, \{\hat{s}^{pv}_{i|k}\}_{i=0}^{N}, \{\hat{v}^{pv}_{i|k}\}_{i=0}^{N}) = \\
    & \min_{\substack{u_{0:N-1|k}}} ~\, \sum_{i=0}^{N-1} \ell(\bar{\mathbf{x}}_{i|k}, ~ u_{i|k}) +\mathbb{E}_{n_{0:N-1}}\Bigr[ V_f(\hat{\mathbf{x}}_{N|k}) \Bigr]\\
    & ~~~~ \textnormal{s.t.,} \, ~~~~ \hat{\mathbf{x}}_{i+1|k} =  \mathbf{A} \hat{\mathbf{x}}_{i|k} + \mathbf{B} u_{i|k} + \mathbf{D}n_{i}, \\
    & \qquad \qquad ~ \bar{\mathbf{x}}_{i+1|k} =\mathbf{A} \bar{\mathbf{x}}_{i|k} + \mathbf{B} u_{i|k}, \\
    & \qquad \qquad ~ \hat{\mathbf{x}}_{0|k} = \bar{\mathbf{x}}_{0|k} = \hat{\mathbf{x}}_{k}, ~ n_i \sim p_n(n), \\
    & \qquad \qquad ~ \hat{\mathbf{x}}_{i|k} \in \mathcal{X}, ~ u_{i|k} \in \mathcal{U}, \\
    & \qquad \qquad ~ \hat{\mathbf{x}}_{i|k} \in \mathcal{C}(\hat{s}^{pv}_{i|k}, \hat{v}^{pv}_{i|k}),  ~ \forall n_i \in 2L\mathcal{W},\\
    & \qquad \qquad ~ \hat{\mathbf{x}}_{N|k} \in \mathcal{X}_f, ~ \forall n_i \in 2L\mathcal{W},\\
    & \qquad \qquad ~ \hat{\mathbf{x}}_{i|k} \in \mathcal{L}(s_\mathrm{tl}) \ominus \mathbf{D}\mathcal{W}, ~\text{if } c_{l^\star,k+i}=\text{red}, \\
    & \qquad \qquad ~ i \in \{0,\cdots,N-1\},
\end{aligned}
\end{equation}
where $V_f(\cdot)$ denotes a terminal cost, which represents the remaining energy-to-spend for the vehicle to cross the upcoming traffic light, $\bar{x} _{i|k}$ is a nominal state at predicted time step $k+i$, and $\mathcal{X}_f$ is a terminal set to satisfy the corresponding traffic light crossing time constraint in \eqref{eq:TL constraint}. Considering the localization uncertainty in \eqref{eq: gps error} and Proposition \ref{prop: delta s and lumped noise}, the constraint $\hat{\mathbf{x}}_{i|k} \in \mathcal{L}(s_\mathrm{tl}) \ominus \mathbf{D}\mathcal{W}$ implies ${\mathbf{x}}_{k+i} \in \mathcal{L}(s_\mathrm{tl})$. This implies that if the upcoming traffic light is red, the constraint prohibits the ego vehicle from crossing it.

We construct the terminal cost $V_f(\cdot)$ and the terminal set $\mathcal{X}_f$ in a data-driven way, which will be detailed in Sec. \ref{sec: data-driven Vf Xf}.
The expected stage cost in \eqref{eq:ftocp} becomes deterministic as described in \eqref{eq:MPC}. This is because $\pi_{i|k}(\hat{\mathbf{x}}_{i|k}) = u_{i|k}$, and the stage cost in \eqref{eq: energy cost regression} does not depend on the position. Thus, $\ell(\hat{\mathbf{x}}_{i|k}, u_{i|k}) = \ell(\bar{\mathbf{x}}_{i|k}, u_{i|k})$, making it deterministic.


To address (C3), we approximate the expected terminal cost with its tractable approximation introduced in \cite[Sec.III.B.1]{ejoalmpc}. 
To apply the approximation from \cite[Sec.III.B.1]{ejoalmpc}, we first reformulate the expectation with respect to the random variables $n_{0:N-1}$ into an expectation with respect to one random variable.
Using the dynamics in \eqref{eq:MPC}, we obtain the following equation:
\begin{equation} \label{eq: hatxN and barxN}
    \hat{\mathbf{x}}_{N|k} = \bar{\mathbf{x}}_{N|k} + \sum_{i=0}^{N-1} Dn_i = \bar{\mathbf{x}}_{N|k} + \mathbf{v}_N,
\end{equation}
where $\mathbf{v}_N = \sum_{i=0}^{N-1} Dn_i$. From \eqref{eq: lumped noise}, we can derive that $\mathbf{v}_N$ has the bounded support $2LN \mathcal{W}$.
Let $p_v(v)$ denote the probability density function of $\mathbf{v}_N$.
\textcolor{black}{In practice, differential GPS, providing centimeter-level positioning accuracy, can be utilized to calculate $\mathbf{v}_N$ and approximate its density function $p_v(v)$ prior to controller deployment. Note that once $p_v(v)$ has been identified, the differential GPS is no longer needed for the controller's operation.}
In summary, $\mathbf{v}_N$ is a random variable with the probability density function $p_v(v)$ and bounded support $2LN \mathcal{W}$ as follows:
\begin{equation} \label{eq: v noise}
\begin{aligned}
    & \mathbf{v}_N \sim p_v(v), ~ \mathbf{v}_N \in 2LN \mathcal{W}.
\end{aligned}
\end{equation}
Utilizing the $\mathbf{v}_N$, we can express the expected terminal cost in \eqref{eq:MPC} as follows:
\begin{equation} \label{eq: reform expectation}
    \mathbb{E}_{n_{0:N\text{-}1}}[ V_f(\hat{\mathbf{x}}_{N|k})] = \mathbb{E}_{\mathbf{v}_N}[ V_f(\hat{\mathbf{x}}_{N|k})] = \mathbb{E}_{\mathbf{v}_N}[ V_f(\bar{\mathbf{x}}_{N|k} + \mathbf{v}_N)],
\end{equation}
where the last equation follows from \eqref{eq: hatxN and barxN}.

In \cite[Sec.III.B.1]{ejoalmpc}, the expected terminal cost in \eqref{eq: reform expectation} is approximated with a convex combination of the $V_f(\cdot)$ values at multiple number of points as follows:
\begin{equation} \label{eq: discrete sum of expectation}
    \mathbb{E}_{\mathbf{v}_N}\Bigr[ V_f(\bar{\mathbf{x}}_{N|k} + \mathbf{v}_N) \Bigr] \simeq \sum_{m=1}^M p_m V_f(\bar{\mathbf{x}}_{N|k} + \mathbf{v}_N^m),
\end{equation}
where $\mathbf{v}_N^m$ denotes $m$-th discretized noises, $M$ represents the number of the discretized noises, and $p_m$ is a convex coefficient of each $\mathbf{v}_N^m$.
The discretized noises $\mathbf{v}_N^m, ~m=\{1,\cdots,M\}$ consist of the recorded noises and the vertices of the bounded support $2LN \mathcal{W}$ as shown in Fig. \ref{fig:disturbance set grid}. The calculation of $p_m$ for each $\mathbf{v}_N^m$ is conducted before solving the MPC (or offline). The details of the calculation can be found in \cite[Appendix.A.1-2]{ejoalmpc}.
\vspace{-1.0em}
\begin{figure}[ht]
\begin{center}
\includegraphics[width=0.85\linewidth,keepaspectratio]{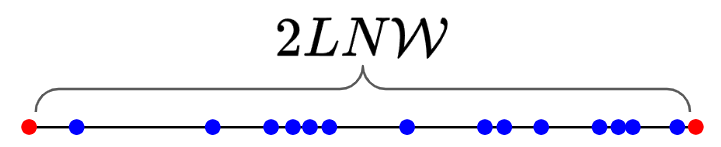}
\vspace{-1.0em}
\caption{The black line is the bounded support $2LN\mathcal{W}$. All dots are the discretized noises. The blue dots denote the recorded noises while the red dots denote vertices of the bounded support $2LN\mathcal{W}$.}
\label{fig:disturbance set grid}
\end{center}
\end{figure}
\vspace{-1.0em}


\section{Data-driven Design of $V_f(\cdot)$ and $\mathcal{X}_f$} \label{sec: data-driven Vf Xf}
In this section, we introduce a data-driven way to construct the terminal cost function $V_f(\cdot)$ and the terminal set $\mathcal{X}_f$, which is proposed in \cite{ejoalmpc}.
The terminal cost function $V_f(\cdot)$ represents the remaining energy-to-spend until the vehicle crosses the upcoming traffic light.
The terminal set $\mathcal{X}_f$ is designed to satisfy the corresponding traffic light crossing time constraints in \eqref{eq:TL constraint}, which impose the vehicle state to cross the upcoming traffic light within the predetermined time step.





This section is structured as follows. 
First, we introduce the dataset generation process following \cite[Sec. IV. A \& Appendix. B]{ejoalmpc}. 
Second, we present the data-driven construction of $V_f(\cdot)$, following \cite[Sec. IV. C]{ejoalmpc}. 
Third, we explain the data-driven construction of $\mathcal{X}_f$ following \cite[Sec. IV. C]{ejoalmpc}.

\subsection{Data: State-Input pairs} \label{sec: data}
We generate the dataset through two processes, initialization and augmentation. The data set is generated through simulation in this paper at Sec. \ref{sec: single road segment}. However, it is worth mentioning that data generation is not limited to simulation; it can also be achieved through closed-loop testing.

\subsubsection{Initialization}
To initialize the dataset, we simulate a simple scenario rather than a complex urban road driving scenario.
In this scenario, the vehicle with zero initial speed is on a single road segment with just one traffic light, located 200 meters away, i.e., $s_\mathrm{tl}^0=200$. The goal for the vehicle is to cross the traffic light within a $20$-second.

We design an MPC for a cruise control algorithm as in \cite[Sec. IV. C]{ejoaiv}, which satisfies the constraints \eqref{eq:ftocp} and tracks a reference speed $15$m$/$s, the speed limit of the road. 
Let $T$ denote the time step that the controlled vehicle crosses the traffic light located 200 meters away. We collect the optimal input sequence $\bar{u}_{0:T}^\star$ and the associated state $\bar{\mathbf{x}}_{0:T}^\star$. Based on these, we construct the initial data set as follows:

\begin{equation} \label{eq: initial dataset}
\begin{split}
    & \mathbf{X}^0 = \begin{bmatrix} \bar{\mathbf{x}}^\star_0 - \mathbf{D}s_\mathrm{tl}^0 &
    \bar{\mathbf{x}}^\star_1 - \mathbf{D}s_\mathrm{tl}^0 &
    \cdots &
    \bar{\mathbf{x}}^\star_{T} - \mathbf{D}s_\mathrm{tl}^0
    \end{bmatrix}, \\
    & \mathbf{U}^0 = \begin{bmatrix} \bar{u}_0^\star &
    \bar{u}_1^\star &
    \cdots &
    \bar{u}_{T}^\star \end{bmatrix},
\end{split}
\end{equation}
where the superscript $^0$ denotes an initialization, and $\mathbf{X}^0$ and $\mathbf{U}^0$ represent an initial state and an initial input data matrices, respectively.
Note that we transform the coordinate system of the state by subtracting the position of the upcoming traffic light $s_\mathrm{tl}^0=200$, with the element now indicating the remaining distance to the traffic light.

\begin{remark}
\textcolor{black}{To address a potential limitation due to initialization with data collected within a fixed setting (200 meters and 20 seconds, 15m$/$s), the initialization parameters can be modified to match the specific requirements of different scenarios. For instance, in urban driving scenarios where traffic lights are located 1000 meters away, the initialization parameter can be adjusted from 200 meters to over 1000 meters. Moreover, the initialization step can be conducted for multiple times with different reference speeds.}
\end{remark}

\subsubsection{Update}
We recursively augment the data set starting from the initial data set \eqref{eq: initial dataset}.
We describe one iteration of the data augmentation process below.
Using the provided data, we construct the terminal cost function $V_f(\cdot)$ as per Sec. \ref{sec: learning V function} and the terminal set $\mathcal{X}_f$ as per Sec. \ref{sec: learning constraints}. 
With these components in place, we construct the proposed MPC by updating the terminal cost $V_f(\cdot)$ and the terminal constraint using $\mathcal{X}_f$, as detailed in Section \ref{sec: Data-driven MPC}.
Subsequently, we control the vehicle with this updated MPC for urban road driving scenarios and record the closed-loop state and the corresponding optimal input at every time step. Finally, we augment the data set as follows:
\begin{equation} \label{eq: j-th dataset}
\begin{split}
    &\mathbf{X}^j = \begin{bmatrix} {\mathbf{X}^{j-1}} &
    \mathbf{x}_0 - \mathbf{D}s_{\mathrm{tl},l^\star(\mathbf{x}_0)} &
    \cdots &
    \mathbf{x}_{N^{a}_\text{data}} - \mathbf{D}s_{\mathrm{tl},l^\star(\mathbf{x}_{N^{a}_\text{data}})} \end{bmatrix}, \\
    & \mathbf{U}^j = \begin{bmatrix} {\mathbf{U}^{j-1}} &
    u_0 &
    \cdots &
    u_{N^{a}_\text{data}} \end{bmatrix},
\end{split}
\end{equation}
where the superscript $j$ represents the iteration of the data augmentation, $N^{a}_\text{data}$ represents the number of augmenting data points, and $l^\star(\mathbf{x})$ is the optimal solution of \eqref{eq: s_tl(x)}.
Let $j_\text{max}$ denote the maximum update iteration.
Throughout the paper, we define $\mathbf{X} = \mathbf{X}^{j_\text{max}}$ and $\mathbf{U} = \mathbf{U}^{j_\text{max}}$ for brevity.

\subsection{Learning the terminal cost $V_f(\cdot)$} \label{sec: learning V function}
The terminal cost function $V_f(\cdot)$ represents the remaining energy-to-spend until the vehicle crosses the upcoming traffic light.
Calculating the terminal cost function $V_f(\cdot)$ exactly presents significant challenges due to the dynamic and uncertain nature of traffic environments. 

We propose a practical approach to approximate $V_f(\cdot)$ by considering the remaining energy-to-spend under certain simplifying assumptions: (1) the time to cross the upcoming traffic light is unconstrained, and (2) there is no front vehicle.
This simplification allows us to focus on the intrinsic energy dynamics of the vehicle itself, decoupled from external factors such as traffic flow and vehicle interactions.

\textcolor{black}{Previous research with extensive vehicle tests \cite{bae2019real, ard2023VILCAV} shows that tracking the speed reference for the entire route under free-flow traffic conditions can improve energy efficiency even with moderate traffic congestion. This finding also aligns with our results. We empirically find that the terminal cost function $V_f(\cdot)$ with simplifying assumptions is a good heuristic for urban road driving aimed at minimizing energy consumption.}


For the rest of the paper, we will use the term "cost-to-go" to refer to the remaining energy-to-spend.
The terminal cost function $V_f(\cdot)$ will be written as a convex combination of known cost-to-go values of the collected data points.

\subsubsection{Initialization}
There are three stages of initializing the terminal cost $V_f(\cdot)$, i.e., computing $V_f^0(\cdot)$ where the superscript $^0$ represents the initialization.

First, let $\mathcal{O}$ denote the region behind the upcoming traffic light located at $s_\text{tl}^0=200$, i.e., $\mathcal{O}=\{\mathbf{x} ~|~ \begin{bmatrix} 1 & 0 \end{bmatrix} \mathbf{x} \geq s_\mathrm{tl}^0, ~\mathbf{x} \in \mathcal{X}\}$. We assign the cost-to-go value to the vertices of $\mathcal{O}$.
Let $v^\mathcal{O}_i$ denote the $i$-th vertex of $\mathcal{O}$, and let $l_\mathcal{O}$ represent the number of vertices of $\mathcal{O}$.
Reminding that $\mathcal{O}$ is a polytope, we can summarize the cost-to-go values for the points within $\mathcal{O}$ as the cost-to-go values of the vertices of $\mathcal{O}$:
\begin{equation}
    \begin{aligned}
        & \mathbf{V}_\mathcal{O} = \begin{bmatrix} v^\mathcal{O}_0 - \mathbf{D}s_\mathrm{tl}^0 & 
        v^\mathcal{O}_1 - \mathbf{D}s_\mathrm{tl}^0 & 
        \cdots & 
        v^\mathcal{O}_{l_\mathcal{O}} - \mathbf{D}s_\mathrm{tl}^0\end{bmatrix}, \\
        & ~\mathbf{J}_\mathcal{O} = \begin{bmatrix} 0 & 0 & \cdots & 0 \end{bmatrix},
    \end{aligned}
\end{equation}
where $\mathbf{V}_\mathcal{O}$ is a matrix where each element is the vertices of $\mathcal{O}$ after undergoing a coordinate system transformation.
In this transformation, each element now denotes the remaining distance to the traffic light.
$\mathbf{J}_\mathcal{O}$ is a matrix where each element corresponds to the remaining energy-to-spend at the corresponding element of $\mathbf{V}_\mathcal{O}$.
By definition, the cost-to-go value is zero at every point in the set $\mathcal{O}$; thus, all elements of $\mathbf{J}_\mathcal{O}$ are zero.

Second, given $\mathbf{X}^0$ and $\mathbf{U}^0$ \eqref{eq: initial dataset}, we calculate the cost-to-go for all data points until the vehicle crosses the upcoming traffic light.
Let $[\mathbf{X}^0]_i$ and $[\mathbf{U}^0]_i$ denote $i$-th data point. 
We compute the cost-to-go value of the $i$-th data point as follows:
\begin{equation}
\begin{split}
    & J_T^0 = 0 ~(\because [\mathbf{X}^0]_T \in \mathcal{O}), \\
    & J_k^0 = \ell([\mathbf{X}^0]_k, ~[\mathbf{U}^0]_k) + J_{k+1}^0 
\end{split}
\end{equation}
We then stack these values for all data points to create an initial cost-to-go data vector, represented as $\mathbf{J}^{0}$ as follows:
\begin{equation}
    \mathbf{J}^{0} = \begin{bmatrix} J_0^0, J_1^0, \cdots, J_T^0 \end{bmatrix},
\end{equation}
where each element of $\mathbf{J}^{0}$ denote the cost-to-go value of the corresponding data point in $[\mathbf{X}^0]$ and $[\mathbf{U}^0]$.

Finally, the terminal cost $V_f^0(\cdot)$ is initialized as a convex combination of the calculated cost-to-go values. 
\begin{equation} \label{eq: initial value function}
\begin{split}
    & V_f^0(\mathbf{x}) = \min_{\bm{\lambda}^0, \bm{\lambda}^\mathcal{O}} ~\mathbf{J}^0 \bm{\lambda}^0 + \mathbf{J}_\mathcal{O} \bm{\lambda}^\mathcal{O} \\
    & \qquad ~~~~~~ \, \textnormal{s.t.,} ~~~ \mathbf{X}^0 \bm{\lambda}^0 + \mathbf{V}_\mathcal{O} \bm{\lambda}^\mathcal{O} = \mathbf{x} - \mathbf{D}\mathbf{s}_\mathrm{tl}, \\
    & \qquad \qquad ~~~~~~~~ \bm{\lambda}^0 \geq \bm{0}, \,\, \bm{1}^\top \bm{\lambda}^0 = 1,\\
    & \qquad \qquad ~~~~~~~~ \bm{\lambda}^\mathcal{O} \geq \bm{0}, \,\, \bm{1}^\top \bm{\lambda}^\mathcal{O} = 1.
\end{split}
\end{equation}
Note that we take a convex combination of the rows in $\mathbf{X}^0$ and $\mathbf{V}_\mathcal{O}$ in the first constraint. 
$V_f^0(\mathbf{x})$ is a convex, piecewise affine function as it is an optimal objective function of the multi-parametric linear program \cite{borrelli2017predictive}. 

\subsubsection{Update}
After the state and input data are updated as in \eqref{eq: j-th dataset}, we update the cost-to-go vector $[\mathbf{J}^{j}]$. The $i$-th element of the cost-to-go vector $[\mathbf{J}^{j}]_i$ is updated as follows:
\begin{equation} \label{eq: cost data}
    \begin{aligned}
        & [\mathbf{J}^{j}]_i =  \ell([\mathbf{X}^{j}]_i, [\mathbf{U}^{j}]_i) \\
        & ~~~~~~~~~~ + \sum_{m=1}^M p_m V_f^{j-1}(\mathbf{A}[\mathbf{X}^{j}]_i + \mathbf{B}[\mathbf{U}^{j}]_i + \mathbf{v}_N^m).
    \end{aligned}
\end{equation}
We update $V_f^{j}(\cdot)$ by solving the following optimization problem using \eqref{eq: j-th dataset} and \eqref{eq: cost data}. 
\begin{equation} \label{eq: value function update}
\begin{split}
    & V_f^{j}(\mathbf{x}) = \min_{\bm{\lambda}^{j}, \bm{\lambda}^\mathcal{O}}~ \mathbf{J}^{j} \bm{\lambda}^{j} + \mathbf{J}_\mathcal{O} \bm{\lambda}^\mathcal{O} \\
    & \qquad \qquad ~~ \textnormal{s.t.,} ~~~ \mathbf{X}^{j} \bm{\lambda}^{j} + \mathbf{V}_\mathcal{O} \bm{\lambda}^\mathcal{O} = \mathbf{x} - \mathbf{D}\mathbf{s}_\mathrm{tl}, \\
    & \qquad \qquad \qquad ~~~ \, \bm{\lambda}^{j} \geq \bm{0}, ~ \bm{1}^\top \bm{\lambda}^{j} = 1, \\
    & \qquad \qquad \qquad ~~~ \, \bm{\lambda}^\mathcal{O} \geq \bm{0}, \,\, \bm{1}^\top \bm{\lambda}^\mathcal{O} = 1.
\end{split}
\end{equation}
$V_f^{j}(x)$ is a convex, piecewise affine function as it is an optimal objective function of the multi-parametric linear program \cite{borrelli2017predictive}.

Remind that $j_\text{max}$ denotes the maximum update iteration.
The terminal cost function can be obtained by $V_f(\cdot) = V_f^{j_\text{max}}(\cdot)$.
\begin{remark}
The function $V^j(x)$ does not have to be calculated before computing the optimal input of the proposed MPC because \eqref{eq: value function update} can be incorporated into the proposed MPC and form a single unified optimization problem.  
\end{remark}

\subsection{Learning the terminal set $\mathcal{X}_f$} \label{sec: learning constraints}
The terminal constraint $\hat{\mathbf{x}}_{N|k} \in \mathcal{X}_f$ in \eqref{eq:MPC} is designed to ensure that the vehicle crosses the traffic light within the predetermined time step given by the high-level module while satisfying the collision avoidance constraints introduced in \eqref{eq:MPC}.
We design two sets in a data-driven way, and the intersection of these two sets will define the terminal set $\mathcal{X}_f$.

We adopt the notion of \textit{Robust Controllable Set} to design these two sets.
A set is $N$-step \textit{Robust Controllable}, if all states belonging to the set can be robustly driven, through a time-varying control law, to the target set in $N$-step \cite[Def.10.18]{borrelli2017predictive}.
We design two \textit{Robust Controllable Sets} where all estimated states of the system \eqref{eq: observer equation} in each set can be robustly steered to each target set against additive lumped noise $n$ \eqref{eq: lumped noise} while satisfying the collision avoidance constraints:
\begin{enumerate} 
    \item $\mathcal{S}_{t_\text{red}}$: $t_\text{red}$-step Robust Controllable Set where the target set is $\hat{\mathcal{G}}_s = \{\mathbf{x} ~|~ \begin{bmatrix} 1 & 0 \end{bmatrix} \mathbf{x} \leq s_\mathrm{tl} - 2Lw_\text{max}\}$.
    \item $\mathcal{P}_{t_\text{green}}$: $t_\text{green}$-step Robust Controllable Set where the target set is $\hat{\mathcal{G}}_p = \{\mathbf{x} ~|~ \begin{bmatrix} 1 & 0 \end{bmatrix} \mathbf{x} \geq s_\mathrm{tl} - 2Lw_\text{min}\}$,
\end{enumerate}
where $w_\text{min}$ and $w_\text{max}$ denote the extreme of the segment $\mathcal{W}$ defined in \eqref{eq: gps error}.
In Fig. \ref{fig:definition of tred tgreen}, $t_\text{red}$ and $t_\text{green}$ are depicted. Remind that $T_l$ denotes the predetermined time to cross the $l$-th traffic light, and $l^\star$ denotes the index of the nearest upcoming traffic light. Thus, $T_{l^\star}$ represents the predetermined time to cross the nearest upcoming traffic light.
\begin{remark}
Let $\mathcal{G}_s = \{\mathbf{x} ~|~ \begin{bmatrix} 1 & 0 \end{bmatrix} \mathbf{x} \leq s_\mathrm{tl}(\mathbf{x})\}$ and $\mathcal{G}_p = \{\mathbf{x} ~|~ \begin{bmatrix} 1 & 0 \end{bmatrix} \mathbf{x} \geq s_\mathrm{tl}(\mathbf{x})\}$. These sets denote the regions before and after the upcoming traffic light, respectively.
Note that the target sets $\hat{\mathcal{G}}_s$ and $\hat{\mathcal{G}}_p$ are obtained by $\hat{\mathcal{G}}_s = \mathcal{G}_s \ominus \mathbf{D}\mathcal{W}$ and $\hat{\mathcal{G}}_p = \mathcal{G}_p \ominus \mathbf{D}\mathcal{W}$.
From Proposition \ref{prop: delta s and lumped noise}, $\mathbf{x} - \hat{\mathbf{x}} \in \mathbf{D}\mathcal{W}$. 
Thus, $\hat{\mathbf{x}} \in \hat{\mathcal{G}}_s$ and $\hat{\mathbf{x}} \in \hat{\mathcal{G}}_p$ imply ${\mathbf{x}} \in {\mathcal{G}}_s$ and ${\mathbf{x}} \in {\mathcal{G}}_p$, respectively.
\end{remark}
\begin{figure}[ht]
\vspace{-1.0em}
\begin{center}
\includegraphics[width=0.75\linewidth,keepaspectratio]{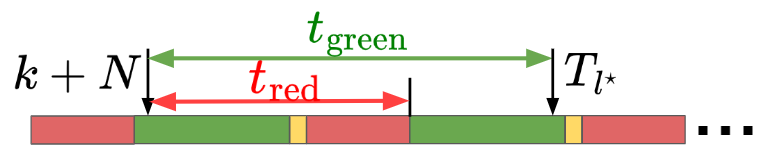}
\vspace{-1.0em}
\caption{Defintion of $t_\text{red}$ and $t_\text{green}$}
\label{fig:definition of tred tgreen}
\end{center}
\end{figure}
\vspace{-2.0em}
\begin{figure}[ht]
\begin{center}
\includegraphics[width=0.75\linewidth,keepaspectratio]{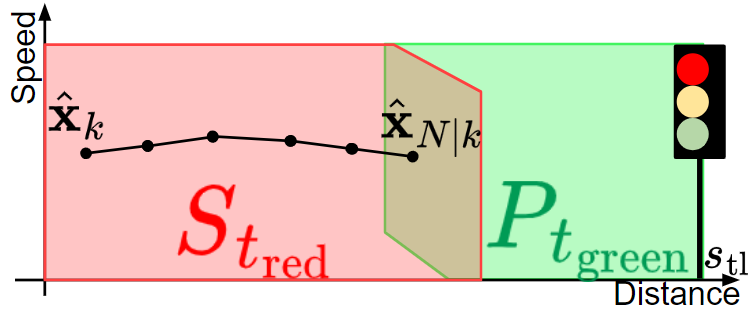}
\vspace{-1.0em}
\caption{Illustration of the terminal constraints $\mathcal{S}_{t_\text{red}}$ and $\mathcal{P}_{t_\text{green}}$: the constraint $\hat{\mathbf{x}}_{N|k} \in \mathcal{S}_{t_\text{red}}$ is to ensure that the vehicle stays behind the traffic light until $t_\text{red}+N$ steps, while the constraint $\hat{\mathbf{x}}_{N|k} \in \mathcal{P}_{t_\text{green}}$ is to ensure that the vehicle passes the traffic light within $t_\text{green}+N$ steps.}
\label{fig:planning_strategy}
\end{center}
\end{figure}

The terminal set $\mathcal{X}_f$ is defined as the intersection of these two sets, $\mathcal{S}_{t_\text{red}} \medcap \mathcal{P}_{t_\text{green}}$.
Let $\hat{\mathbf{x}}_{N|k}$ denotes a predicted estimated state at time step $k$+$N$ calculated from the current estimate $\hat{\mathbf{x}}_{k}$, i.e., a terminal state.
Then, the terminal constraint, $\hat{\mathbf{x}}_{N|k} \in \mathcal{X}_f = \mathcal{S}_{t_\text{red}} \medcap \mathcal{P}_{t_\text{green}}$, enforces not to pass the traffic light before it turns green while guaranteeing that the vehicle crosses the traffic light within the time predetermined  by the high-level module.
Fig. \ref{fig:planning_strategy} illustrates the terminal constraints. 

In this paper, the sets $\mathcal{S}_{t_\text{red}}$ and $\mathcal{P}_{t_\text{green}}$ are calculated in a data-driven way by Algorithm \ref{alg: RNTW} with $\hat{\mathcal{G}}_s$ and $\hat{\mathcal{G}}_p$ as the respective target convex sets. 
The corresponding outputs of Algorithm \ref{alg: RNTW} are assigned to $\mathcal{S}_{t_\text{red}}$ and $\mathcal{P}_{t_\text{green}}$.

\begin{algorithm} \label{alg: RNTW}
    \SetKwInOut{KwIn}{Input}
    \SetKwInOut{KwOut}{Output}
    \KwIn{Data matrices $\mathbf{X}, \mathbf{U}$, \\ Remaining steps $t$, Target convex set $\mathcal{G}$, \\ Predicted behavior of the front vehicle \\ $\{\hat{s}^{pv}_{i|k}\}_{i=N}^{t+N}$, $\{\hat{v}^{pv}_{i|k}\}_{i=N}^{t+N}$} 
    \KwOut{$t$-step \textit{Robust Controllable Set} $\mathcal{R}_t$} 

    $\mathcal{R}_0 \leftarrow \mathcal{G} \medcap \mathcal{C}(\hat{s}^{pv}_{t+N|k}, \hat{v}^{pv}_{t+N|k})$
    
    
    \For{$p \leftarrow 0$ \KwTo $t-1$}{
        $\mathbb{X} \leftarrow \emptyset$
        
        \For{$q \leftarrow 1$ \KwTo $N_{\mathbf{X}}$}{
            \If{$\mathbf{A}[\mathbf{X}]_q + \mathbf{B}[\mathbf{U}]_q \in \mathcal{R}_{p} \ominus 2L \mathbf{D} \mathcal{W}$}{
            \If{$[\mathbf{X}]_q \in \mathcal{C}(\hat{s}^{pv}_{t+N-p-1|k}, \hat{v}^{pv}_{t+N-p-1|k})$}
            {
                $\mathbb{X} \leftarrow \mathbb{X} \medcup \{[\mathbf{X}]_q\}$
                
                }
             }
        }
        $\mathcal{R}_{p+1} \leftarrow \text{conv}(\mathbb{X})$ \\
    }
    \KwRet{$\mathcal{R}_t$} 
    \caption{Data-driven computation of \\ $~~~~~~~~~~~~~~~~~~ t$-step \textit{Robust Controllable Set} $\mathcal{R}_t$}
\end{algorithm}
$\mathcal{R}_{i}$ represents $i$-step robust controllable set, which is initialized with to the target set $\mathcal{G}$ at line 2.
In line 4, $N_{\mathbf{X}}$ denotes the number of the column vectors of $\mathbf{X}$.
Line $5$ at the iteration $i$ checks whether the state $[\mathbf{X}]_j$ of the system \eqref{eq: observer equation} can be robustly steered to $\mathcal{R}_{i-1}$. Then, at line $6$, the collision avoidance satisfaction is checked at the state $[\mathbf{X}]_j$. If both conditions are satisfied, the state $[\mathbf{X}]_j$ is included in a set $\mathbb{X}$. At the end of the iteration $i$ (line 11), $\mathcal{R}_i$ is set as the convex hull of the set $\mathbb{X}$.
By the following theorem, the output of the Algorithm \ref{alg: RNTW} is a $t$-step robust controllable set for a given target convex set $\mathcal{G}$. 
\begin{theorem} \label{thm1}
    The output of Algorithm \ref{alg: RNTW} $\mathcal{R}_t$ is $t$-step robust controllable set of the system \eqref{eq: observer equation} perturbed by the noise \eqref{eq: lumped noise} for a given target convex set $\mathcal{G}$ subject to the constraints \eqref{eq:constraints} and the collision avoidance constraints given the predicted behavior of the front vehicle $\{\hat{s}^{pv}_{i|k}\}_{i=N}^{t+N}$ and $\{\hat{v}^{pv}_{i|k}\}_{i=N}^{t+N}$.  
\end{theorem}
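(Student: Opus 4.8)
The plan is to prove Theorem \ref{thm1} by induction on the backward iteration index $p$ of Algorithm \ref{alg: RNTW}, carrying as the inductive invariant that $\mathcal{R}_p$ is a $p$-step robust controllable set whose trajectories additionally respect the collision avoidance constraints at the correct predicted time steps. Concretely, I would show that every $\hat{\mathbf{x}} \in \mathcal{R}_p$ admits an admissible control sequence that robustly steers the observer system \eqref{eq: observer equation} into the target $\mathcal{G}$ in exactly $p$ steps for all lumped-noise realizations $n \in 2L\mathcal{W}$, while remaining in $\mathcal{X}$, using inputs in $\mathcal{U}$, and satisfying $\mathcal{C}(\hat{s}^{pv}_{i|k}, \hat{v}^{pv}_{i|k})$ for each $i = t+N-p, \dots, t+N$.

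For the base case, $\mathcal{R}_0 = \mathcal{G} \medcap \mathcal{C}(\hat{s}^{pv}_{t+N|k}, \hat{v}^{pv}_{t+N|k})$ is contained in the target and satisfies the terminal-time collision constraint by construction, so it is trivially $0$-step robust controllable. For the inductive step, assume $\mathcal{R}_p$ obeys the invariant and take a data point $[\mathbf{X}]_q$ that passes both tests in lines 5--6. Line 5 imposes $\mathbf{A}[\mathbf{X}]_q + \mathbf{B}[\mathbf{U}]_q \in \mathcal{R}_p \ominus 2L\mathbf{D}\mathcal{W}$, which by the definition of the Pontryagin difference is equivalent to $\mathbf{A}[\mathbf{X}]_q + \mathbf{B}[\mathbf{U}]_q + \mathbf{D}n \in \mathcal{R}_p$ for every $n \in 2L\mathcal{W}$; hence the input $[\mathbf{U}]_q$ robustly drives $[\mathbf{X}]_q$ into $\mathcal{R}_p$ in one step against all admissible noise \eqref{eq: lumped noise}. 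Line 6 enforces the collision constraint at time $t+N-p-1$. Composed with the inductive hypothesis, $[\mathbf{X}]_q$ is therefore $(p+1)$-step robust controllable with the correct collision-constraint history.

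The main step, and the one I expect to be the crux, is to propagate this property from the finite qualifying set $\mathbb{X}$ to its convex hull $\mathcal{R}_{p+1} = \text{conv}(\mathbb{X})$. For any $\hat{\mathbf{x}} = \sum_i \alpha_i [\mathbf{X}]_{q_i}$ with $\alpha_i \geq 0$ and $\sum_i \alpha_i = 1$, I would assign the input $u = \sum_i \alpha_i [\mathbf{U}]_{q_i}$. Since the dynamics are affine and $\sum_i \alpha_i = 1$, for each fixed $n$ one gets $\mathbf{A}\hat{\mathbf{x}} + \mathbf{B}u + \mathbf{D}n = \sum_i \alpha_i (\mathbf{A}[\mathbf{X}]_{q_i} + \mathbf{B}[\mathbf{U}]_{q_i} + \mathbf{D}n)$, a convex combination of points each lying in the convex set $\mathcal{R}_p$, so the successor stays in $\mathcal{R}_p$ for all $n \in 2L\mathcal{W}$. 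Convexity of $\mathcal{U}$, of the speed-only set $\mathcal{X}$, and of the half-space collision set $\mathcal{C}(\cdot)$ in \eqref{eq: collision avoidance constraint} then yields $u \in \mathcal{U}$, $\hat{\mathbf{x}} \in \mathcal{X}$, and $\hat{\mathbf{x}} \in \mathcal{C}(\hat{s}^{pv}_{t+N-p-1|k}, \hat{v}^{pv}_{t+N-p-1|k})$, so the hull point is again $(p+1)$-step robust controllable. Setting $p = t$ closes the induction and delivers $\mathcal{R}_t$.

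I expect the delicate points to be twofold: (i) faithfully tracking the time-indexing of the collision constraint, so that the reconstructed time-varying control law honors $\mathcal{C}(\hat{s}^{pv}_{i|k},\hat{v}^{pv}_{i|k})$ at every predicted step rather than at a single fixed instant; and (ii) justifying that the convex-combination input is a genuine robust one-step controller despite a shared noise realization, which rests precisely on $\sum_i \alpha_i = 1$ so that the additive term $\mathbf{D}n$ is reproduced exactly rather than rescaled. The auxiliary feasibility facts ($[\mathbf{X}]_q \in \mathcal{X}$ and $[\mathbf{U}]_q \in \mathcal{U}$), which the algorithm does not recheck, follow because the data are generated by feasible closed-loop rollouts and convexity then transports them to the hull.
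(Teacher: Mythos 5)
Your proof is correct and follows essentially the same route as the paper's: induction on $p$, with lines 5--6 of Algorithm \ref{alg: RNTW} certifying one-step robust controllability and collision-constraint satisfaction for the qualifying data points, and convexity of $\mathcal{R}_p$, $\mathcal{X}$, $\mathcal{U}$, and the half-space $\mathcal{C}(\cdot)$ transporting these properties to the convex hull. Your treatment of the shared noise realization---fixing $n \in 2L\mathcal{W}$ and using $\sum_i \alpha_i = 1$ to write $\mathbf{A}\hat{\mathbf{x}} + \mathbf{B}u + \mathbf{D}n$ as a convex combination of points in $\mathcal{R}_p$---is a slightly cleaner rendering of the step the paper carries out by forming convex combinations of noise realizations $n_c$ and observing that these cover all of $2L\mathcal{W}$.
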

\begin{proof}
    See Appendix \ref{app. thm}.
\end{proof}

\section{Controller} \label{sec: Data-driven MPC}
We implement two MPC controllers: 
\begin{itemize}
    \item If the remaining time exceeds $k+N \leq T_{l^\star}$, we employ the standard MPC controller introduced in \eqref{eq:mpc reform}, which is the primary choice given its applicability to most scenarios.
    \item Conversely, if the remaining time is less than $k+N > T_{l^\star}$, we opt for the simplified version of the standard MPC controller introduced in \eqref{eq: mpc simplified}. 
\end{itemize}
Recall that $T_{l^\star}$ is the predetermined time to cross the nearest upcoming traffic light as illustrated in Fig. \ref{fig:definition of tred tgreen}.

First, we introduce the standard MPC controller with prediction horizon $N$.
We incorporate the terminal cost function $V_f(\mathbf{x})$ in \eqref{eq: value function update} and the terminal set $\mathcal{X}_f=\mathcal{S}_{t_\text{red}} \medcap \mathcal{P}_{t_\text{green}}$ into the MPC controller in \eqref{eq:MPC}.
Additionally, we reformulate the constraints on estimate states to those on nominal states\footnote{See Appendix \ref{app. constraint reform} for the details of the constraint reformulation.}.
Specifically, we design an  MPC controller as follows:
\begin{equation} \label{eq:mpc reform}
\begin{aligned}
    & \Tilde{J}_\text{MPC} (\hat{\mathbf{x}}_{k}, \{\hat{s}^{pv}_{i|k}\}_{i=0}^{N}, \{\hat{v}^{pv}_{i|k}\}_{i=0}^{N}) = \\
    & \min_{\substack{u_{0:N-1|k}, \\ \mathbf{s}_f}}  \sum_{i=0}^{N-1} \ell(\bar{\mathbf{x}}_{i|k}, ~ u_{i|k}) +\sum_{m=1}^M p_m V_f(\bar{\mathbf{x}}_{N|k} + \mathbf{v}_N^m) + M_f \mathbf{s}_f\\
    & ~~~~ \textnormal{s.t.,} \, ~~ \bar{\mathbf{x}}_{i+1|k} =\mathbf{A} \bar{\mathbf{x}}_{i|k} + \mathbf{B} u_{i|k} \\
    & \qquad ~~~~~ \bar{\mathbf{x}}_{0|k} = \hat{\mathbf{x}}_{k}, \\
    & \qquad ~~~~~  \bar{\mathbf{x}}_{i|k} \in \mathcal{X}, ~ u_{i|k} \in \mathcal{U},\\
    & \qquad ~~~~~  \bar{\mathbf{x}}_{i|k} \in \mathcal{C}(\hat{s}^{pv}_{i|k}, \hat{v}^{pv}_{i|k}) \ominus 2Li\mathbf{D}\mathcal{W} \\
    & \qquad ~~~~~  \bar{\mathbf{x}}_{N|k} \in (\mathcal{S}_{t_\text{red}} \medcap \mathcal{P}_{t_\text{green}}) \ominus 2LN\mathbf{D}\mathcal{W} \oplus \mathbf{s}_f,\\
    & \qquad ~~~~~  \bar{\mathbf{x}}_{i|k} \in \mathcal{L}(s_\mathrm{tl}) \ominus (2L+1)\mathbf{D}\mathcal{W}, \text{if} \,  c_{l^\star,k+i}=\text{red}, \\
    & \qquad ~~~~~  \mathbf{s}_f \geq 0, \\
    & \qquad ~~~~~  i \in \{0,\cdots,N-1\},
\end{aligned}
\end{equation}
where $M_f$ is a sufficiently large number, which is set to $10000$ in this paper, and $\mathbf{s}_f$ denotes a slack variable, which is required to ensure feasibility when the data to construct the terminal set is insufficient, or the surrounding environment abruptly changes, such as the front vehicle abruptly decelerating or another vehicle cutting into the ego vehicle's lane.
Note that the objective function in \eqref{eq:mpc reform} is identical to that in \eqref{eq:MPC} since the stage cost in \eqref{eq: energy cost regression} does not depend on the position and thus $\ell(\hat{\mathbf{x}}_k, u_k) = \ell(\bar{\mathbf{x}}_k, u_k)$.
The minimization problem \eqref{eq: initial value function} can be incorporated into the problem \eqref{eq:mpc reform} and form a single unified optimization problem\footnote{See Appendix \ref{app: integration MPC} for the formulation.}.

Second, we introduce the simplified version of the standard MPC controller \eqref{eq:mpc reform} with shrinking prediction horizon $N_s = T_{l^\star} - k$. Specifically, we design an  MPC controller as:
\begin{equation} \label{eq: mpc simplified}
\begin{aligned}
    & \Tilde{J}_\text{MPC}^\text{simplified} (\hat{\mathbf{x}}_{k}, \{\hat{s}^{pv}_{i|k}\}_{i=0}^{N_s}, \{\hat{v}^{pv}_{i|k}\}_{i=0}^{N_s}) = \\
    & \min_{\substack{u_{0:N_s-1|k},\\ \mathbf{s}_f}} \sum_{i=0}^{N_s-1} \ell(\bar{\mathbf{x}}_{i|k}, ~ u_{i|k}) +  M_f \mathbf{s}_f\\
    & ~~~~ \textnormal{s.t.,} \, ~~ \bar{\mathbf{x}}_{i+1|k} =\mathbf{A} \bar{\mathbf{x}}_{i|k} + \mathbf{B} u_{i|k} \\
    & \qquad ~~~~~  \bar{\mathbf{x}}_{0|k} = \hat{\mathbf{x}}_{k}, \\
    & \qquad ~~~~~  \bar{\mathbf{x}}_{i|k} \in \mathcal{X}, ~ u_{i|k} \in \mathcal{U},\\
    & \qquad ~~~~~  \bar{\mathbf{x}}_{i|k} \in \mathcal{C}(\hat{s}^{pv}_{i|k}, \hat{v}^{pv}_{i|k}) \ominus 2Li\mathbf{D}\mathcal{W} \\
    & \qquad ~~~~~  \bar{\mathbf{x}}_{N_s|k} \in \hat{\mathcal{G}}_p \ominus 2LN_s\mathbf{D}\mathcal{W}\oplus \mathbf{s}_f,\\
    & \qquad ~~~~~  \bar{\mathbf{x}}_{i|k} \in \mathcal{L}(s_\mathrm{tl}) \ominus (2Li+1)\mathbf{D}\mathcal{W}, \text{if} \, c_{l^\star,k+i}=\text{red}, \\
    & \qquad ~~~~~  \mathbf{s}_f \geq 0 \\
    & \qquad ~~~~~  i \in \{0,\cdots,N_s-1\}.
\end{aligned}
\end{equation}
Remind that ${\mathcal{G}}_p$ denote the region after the upcoming traffic light, and $\hat{\mathcal{G}}_p = {\mathcal{G}}_p \ominus \mathbf{D}\mathcal{W}$ as defined in Sec. \ref{sec: learning constraints}. 
If $\mathbf{s}_f = 0$, the terminal constraint of the \eqref{eq: mpc simplified}, i.e., $\bar{\mathbf{x}}_{N_s|k} \in \hat{\mathcal{G}}_p \ominus 2LN_s\mathbf{D}\mathcal{W}\oplus \mathbf{s}_f$, imply that $\hat{\mathbf{x}}_{N_s|k} \in \hat{\mathcal{G}}_p$ and thus ${\mathbf{x}}_{N_s|k} \in {\mathcal{G}}_p$. By definition, the cost-to-go value for all $\mathbf{x}_{N_s|k} \in {\mathcal{G}}_p$ is zero. Therefore, all cost functions and constraints related to calculating the terminal cost $V_f(\cdot)$ are removed from \eqref{eq:mpc reform}.

In summary, if $k+N \leq T_{l^\star}$, the MPC controller in \eqref{eq:mpc reform} is employed; otherwise, the MPC controller in \eqref{eq: mpc simplified} is employed.
The resulting control policy is:
\begin{equation} \label{eq: control policy final}
    \pi(\hat{x}_k)= 
    \begin{cases}
        u^\star_{0|k} \text{ from \eqref{eq:mpc reform}} ,& \text{if } k+N \leq T_{l^\star},\\
        u^\star_{0|k} \text{ from \eqref{eq: mpc simplified}},              & \text{otherwise}.
    \end{cases}
\end{equation}

\begin{theorem} \label{thm3}
     Consider the closed-loop system \eqref{eq:system} controlled by the control policy \eqref{eq: control policy final}.
     If the optimal slack variable $\mathbf{s}_f^\star = 0$ at time step $k$, then there exists a control input sequence that steers the system \eqref{eq:system} to the region after the upcoming traffic light within the corresponding predetermined time, while satisfying the constraints \eqref{eq: collision avoidance constraint}-\eqref{eq: light rule constraint}.
\end{theorem}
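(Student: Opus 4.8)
The plan is to split the argument according to which branch of the control policy \eqref{eq: control policy final} is active, and in each branch exploit the fact that $\mathbf{s}_f^\star=0$ collapses the slackened terminal constraint into a genuine set membership. The common thread is a three-step reduction: (i) use $\mathbf{s}_f^\star=0$ to place the \emph{nominal} terminal state in a tightened terminal set; (ii) undo the tightening with the Pontryagin-difference/tube bookkeeping of Proposition \ref{prop: delta s and lumped noise} to conclude a statement about the \emph{estimated} terminal state that holds for every admissible noise realization; and (iii) invoke the robust-controllable-set property (Theorem \ref{thm1}) to extend the open-loop MPC trajectory by a control law that carries the estimate into the post-light target, which in turn forces the true state of \eqref{eq:system} into the region after the light.

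First I would treat the standard case $k+N\le T_{l^\star}$, where the policy uses \eqref{eq: mpc unified}. Setting $\mathbf{s}_f^\star=0$ gives $\bar{\mathbf{x}}_{N|k}\in(\mathcal{S}_{t_\text{red}}\medcap\mathcal{P}_{t_\text{green}})\ominus 2LN\mathbf{D}\mathcal{W}$. Since $\hat{\mathbf{x}}_{N|k}=\bar{\mathbf{x}}_{N|k}+\mathbf{v}_N$ with $\mathbf{v}_N\in 2LN\mathbf{D}\mathcal{W}$ by \eqref{eq: hatxN and barxN} and \eqref{eq: v noise}, the definition of the Pontryagin difference yields $\hat{\mathbf{x}}_{N|k}\in\mathcal{S}_{t_\text{red}}\medcap\mathcal{P}_{t_\text{green}}\subseteq\mathcal{P}_{t_\text{green}}$ for all noise realizations over the horizon. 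By Theorem \ref{thm1}, $\mathcal{P}_{t_\text{green}}$ is a $t_\text{green}$-step robust controllable set to $\hat{\mathcal{G}}_p$ under the state, input, and collision-avoidance constraints, so there exists a time-varying control law steering the estimate from $\hat{\mathbf{x}}_{N|k}$ into $\hat{\mathcal{G}}_p$ within $t_\text{green}$ steps while respecting those constraints; by the remark following the definition of $\hat{\mathcal{G}}_p$, $\hat{\mathbf{x}}\in\hat{\mathcal{G}}_p$ implies ${\mathbf{x}}\in{\mathcal{G}}_p$. Concatenating the MPC inputs $u^\star_{0:N-1|k}$ with the inputs produced by this control law gives the desired steering sequence, and the timing relation $N+t_\text{green}=T_{l^\star}-k$ (from the definition of $t_\text{green}$ in Fig. \ref{fig:definition of tred tgreen}) certifies that $\mathcal{G}_p$ is reached by step $T_{l^\star}$. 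Constraint satisfaction over the first $N$ steps follows because the reformulated, tightened constraints of \eqref{eq: mpc unified} imply the original constraints \eqref{eq: collision avoidance constraint}, \eqref{eq:constraints} on the true trajectory via Proposition \ref{prop: delta s and lumped noise}, and over the remaining $t_\text{green}$ steps they are guaranteed by the robust controllable set construction itself.

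Next I would dispatch the simplified case $k+N>T_{l^\star}$, where the policy uses \eqref{eq: mpc simplified} with $N_s=T_{l^\star}-k$. Here $\mathbf{s}_f^\star=0$ directly gives $\bar{\mathbf{x}}_{N_s|k}\in\hat{\mathcal{G}}_p\ominus 2LN_s\mathbf{D}\mathcal{W}$, and the same tube argument yields $\hat{\mathbf{x}}_{N_s|k}\in\hat{\mathcal{G}}_p$, hence $\mathbf{x}_{N_s|k}\in\mathcal{G}_p$. Because $N_s=T_{l^\star}-k$, the vehicle is already in the region after the light exactly at step $T_{l^\star}$, so the MPC input sequence $u^\star_{0:N_s-1|k}$ itself is the required sequence, with constraint satisfaction again inherited from the tightened constraints of \eqref{eq: mpc simplified} through Proposition \ref{prop: delta s and lumped noise}.

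I expect the main obstacle to be the bookkeeping that links the nominal-state constraints used inside the optimizations \eqref{eq: mpc unified}--\eqref{eq: mpc simplified} to the constraints on the true closed-loop state under localization uncertainty: one must verify that each level of tightening ($\ominus 2Li\mathbf{D}\mathcal{W}$ for the running constraints, $\ominus 2LN\mathbf{D}\mathcal{W}$ for the terminal set, $\ominus(2Li+1)\mathbf{D}\mathcal{W}$ for the light-rule constraints) is precisely the amount removed in Theorem \ref{thm1} and in the constraint-reformulation appendix, so that the membership statements propagate cleanly from $\bar{\mathbf{x}}$ to $\hat{\mathbf{x}}$ to $\mathbf{x}$ for every admissible noise. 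A secondary delicacy is the timing alignment—confirming that the horizon $t_\text{green}$ baked into $\mathcal{P}_{t_\text{green}}$ is calibrated so that $N+t_\text{green}$ never exceeds $T_{l^\star}-k$—together with checking that the collision-avoidance predictions $\{\hat{s}^{pv}_{i|k},\hat{v}^{pv}_{i|k}\}_{i=N}^{t_\text{green}+N}$ used to build $\mathcal{P}_{t_\text{green}}$ coincide with those governing the extended trajectory, which is exactly what Theorem \ref{thm1} supplies.
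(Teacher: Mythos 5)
Your proposal is correct and follows essentially the same route as the paper's own proof: split on which MPC branch of \eqref{eq: control policy final} is active, use the unslackened terminal constraint together with the tube bookkeeping (the paper cites its constraint-reformulation result \eqref{eq: constraint reform result}, which rests on exactly the relation $\hat{\mathbf{x}}_{N|k}=\bar{\mathbf{x}}_{N|k}+\mathbf{v}_N$ you invoke) to place the estimate in $\mathcal{X}_f$ or $\hat{\mathcal{G}}_p$, then apply the robust-controllable-set property of $\mathcal{P}_{t_\text{green}}$ from Theorem \ref{thm1} and the inclusion $\hat{\mathcal{G}}_p=\mathcal{G}_p\ominus\mathbf{D}\mathcal{W}$ to land the true state in $\mathcal{G}_p$. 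Your write-up is, if anything, slightly more explicit than the paper's on the timing alignment $N+t_\text{green}=T_{l^\star}-k$ and on constraint satisfaction along the concatenated sequence, but the underlying argument is the same.
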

\begin{proof}
    See Appendix \ref{app. thm3}.
\end{proof}
\begin{remark}
Theorem \ref{thm3} describes that if $\mathbf{s}_f^\star \neq 0$, we cannot ensure that there is a control sequence that steers the system \eqref{eq:system} after the upcoming traffic light within the predetermined time. This could happen in practice if the data is insufficient and/or the predicted speed is overestimated by the actual speed of the front vehicle. It is also possible that the predetermined time computed by the high-level green wave search module is inaccurate. Thus, in this case, we increase the remaining time step $T_{l^\star(\mathbf{x})}$ illustrated in Fig. \ref{fig:definition of tred tgreen} and relax the problem, which then solves the problem of crossing the upcoming traffic light with a larger remaining time step.
\end{remark}

\section{Validation} \label{sec: experiment}
\subsection{Energy Consumption Model}
We collected the energy consumption data of our electric test vehicle and solved the regression problem \eqref{eq: energy cost regression}. The energy consumption measurement data used for the regression was collected from 3.9 km of city driving scenarios in a test track. 
The regression results show that the error of the total energy consumption is below $1 \%$. 

We validated our regression model in \eqref{eq: energy cost regression} using an additional dataset not employed during the regression process. This test dataset is crucial for assessing the model's generalizability. We conducted 15 additional tests. For each trial, we calculated the percentage error in total energy consumption between measurements and the regression model \eqref{eq: energy cost regression}. The results are described in Table. \ref{table: energy eval2}, which indicates that the mean error is around $1 \%$, and the worst-case error is $6.3 \%$.
\vspace{-0.5em}
\begin{table}[h]
\centering
\caption{Error of the cumulative energy consumption (\%)}
\vspace{-2mm}
\label{table: energy eval2}
\begin{tabular}{ ||l||c||c||} 
 \hline
 Mean[\%]&Standard deviation[\%]&Worst case[\%]\\
 \hline
 $-1.06\%$&$3.41 \%$&$-6.3\%$\\
 \hline
\end{tabular}
\end{table}
\vspace{-0.5em}

\subsection{Baseline} \label{sec: baseline}
Two baseline algorithms were considered in this study. 
The first baseline algorithm is a cruise controller in \cite{ejoaiv}. 
The second baseline algorithm is the algorithm introduced in \cite{bae2019VIL}. 
The major differences between the algorithm in \cite{bae2019VIL} and the proposed algorithm are:
\begin{itemize}
    \item The algorithm in \cite{bae2019VIL} disregards localization uncertainty, while the proposed algorithm addresses this uncertainty.
    \item The algorithm in \cite{bae2019VIL} needs to compute speed references for the entire route, while the proposed algorithm calculates speed references for a short prediction horizon, requiring less computational power.
    \item The algorithm in \cite{bae2019VIL} has a hierarchical structure: a speed planner and an additional speed tracking controller, resulting in the latency and compounding error issue. In contrast, the proposed algorithm is a single MPC controller with longitudinal acceleration input.
\end{itemize}
Note that for three algorithms, there is an actuator-level controller that converts longitudinal acceleration commands into electric motor torque inputs. 
\textcolor{black}{This controller was developed by the manufacturer and is utilized to control the electric motors in the vehicle when its cruise controller is operating.}

\subsection{Localization Uncertainty and Controller Parameters}
Throughout the validation, we assume that a localization module provides position information every $1 \ \text{s}$ under the following localization uncertainty $w$:
\begin{equation}
    p_w(w) \sim U(-3, 3), ~\mathcal{W}=\{w ~|~ -3\leq w \leq 3\},
\end{equation}
where $U$ represents a uniform distribution. $3$m in longitudinal position error exceeds the accuracy of lane-level positioning \cite{williams2020qualitative}, which can be achievable using off-the-shelf systems using vision \cite{mobileye} and V2I communication \cite{cohda}. It is $95 \%$ Circle of Error Probable (CEP) for commercially available GPS \cite{vbox}.

We set the prediction horizon $N=5$, which represents $5 \text{s}$ prediction as the discretization time $T_s = 1 \text{s}$.
The observer gain $L$ in \eqref{eq: observer equation} is set to $\frac{1}{4N}$. 

\subsection{Data collection and Training} \label{sec: single road segment}
We collect the data using simulation. Given the simplicity of the system \eqref{eq:system}, data collection through simulation is feasible.
The scenario is set as follows. The vehicle's initial speed is set to zero, and the testing road is a single road segment with fixed parameters outlined in Table \ref{table: param 1}. Essentially, the vehicle needs to pass an upcoming traffic light, situated $200$ m away.
\begin{table}[h]
\centering
\caption{Parameters for the single road segment scenario}
\label{table: param 1}
\begin{tabular}{ ||l||c||} 
 \hline
 Parameter & Value\\
 \hline
 Current traffic signal &  Green \\
 Traffic light cycle  & Green: $150$s, Yellow: $5$s, Red: $25$s \\ 
 Remaining time/distance & $150$s/$200$m \\
 \hline
\end{tabular}
\end{table}

For the initialization, we assume the free-flow condition.
As explained in Sec. \ref{sec: data}, we initialize the data set by executing the cruise controller in \cite[Sec. IV. C]{ejoaiv} and collecting the state and input data as described in \eqref{eq: initial dataset}. 

For the update, we assume a constant flow speed, and the front vehicle follows the constant flow speed.
To generate such a scenario, we spawn one front vehicle. This front vehicle is located ahead of the ego vehicle with a random distance, $d_0 \sim U(5,15)$. Moreover, this front vehicle has a random initial speed, $v_0^{pv} \sim U(2,15)$, and keeps its speed until the end of the scenario, i.e., $v_k^{pv} = v_0^{pv}, ~\forall k \geq 0$.
Once the traffic flow speed $v_0^{pv}$ is sampled, the time to cross the upcoming traffic light is calculated as follows:
\begin{equation} \label{eq: time to cross, single}
    T_{1} = \biggr\lceil \frac{200}{v_0^{pv}} \biggr\rceil + 1,
\end{equation}
where $\lceil \cdot \rceil$ denotes the ceiling function.
Given the dataset, we compute the terminal set $\mathcal{X}_f$ as per Sec. \ref{sec: learning constraints} and the terminal cost $V_f(\cdot)$ as per Sec. \ref{sec: learning V function}.
Integrating $\mathcal{X}_f$ and $V_f(\cdot)$ into the proposed MPC \eqref{eq:mpc reform} and \eqref{eq: mpc simplified}, we solve MPC \eqref{eq:mpc reform} and apply the optimal input to the system \eqref{eq:system}. While running the MPC \eqref{eq:mpc reform}, we record the closed-loop state and input pairs. After completing each task iteration, we augment the data using these state-input pairs.

Fig. \ref{fig:energyconsumption_single} illustrates the learning curve of the realized total energy consumption. 
The x-axis denotes the number of data augmentation processes.
Due to the uncertainty described in system \eqref{eq:system} and the randomness of the scenario parameters, the total energy consumption is a random variable. Therefore, we calculate the sample mean of the total energy consumption by performing 100 Monte Carlo simulations of the system \eqref{eq:system} in closed-loop using the MPC \eqref{eq:mpc reform} and \eqref{eq: mpc simplified} after each data augmentation process is completed. 
The results demonstrate $31.7 \%$ improvement in total energy consumption as the number of data augmentations (or task iterations) increases, eventually leading to settled performance.
\begin{figure}[ht]
\begin{center}
\includegraphics[width=0.65\linewidth,keepaspectratio]{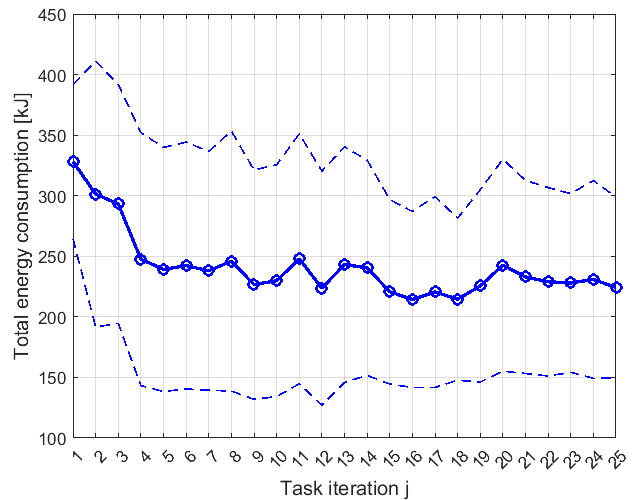}
\vspace{-1.0em}
\caption{Total Energy Consumption \textcolor{black}{Decreases} with Increasing Data Size. $100$ Monte Carlo simulations for each task iteration.}
\label{fig:energyconsumption_single}
\end{center}
\end{figure}

\subsection{Simulation results} \label{sec: sim result}
We conduct simulation studies to investigate the performance of the proposed MPC \eqref{eq:mpc reform} and \eqref{eq: mpc simplified}. We compare the proposed algorithm with two baseline algorithms described in Sec. \ref{sec: baseline}.

The simulations are conducted across four distinct scenarios, each designed to capture various traffic flow conditions.
In all scenarios, the ego vehicle's initial speed is set to zero, the front vehicle is positioned $5$m ahead, and the testing road is a single road segment with fixed parameters outlined in Table \ref{table: param 1}. 
The front vehicle's controller is the cruise control algorithm in \cite{ejoaiv} and its reference speed varies across the scenarios as follows: (1) $v_k^{pv}=2.5$m$/$s, (2) $v_k^{pv}=5.0$m$/$s, (3) $v_k^{pv}=7.5$m$/$s, (4) $v_k^{pv}=10.0$m$/$s. 
The time to cross the upcoming traffic light in each scenario is calculated accordingly based on \eqref{eq: time to cross, single}.
After we conduct each simulation scenario, we evaluate the energy consumption defined in \eqref{eq: energy definition}.

Energy consumption results are described in Table \ref{table: energy eval simul}. The unit for the energy consumption is kJ. We tune the reference speed of the cruise control algorithm and also configure the terminal time and the regularization parameter in \cite[eq. (4)]{bae2019VIL}, ensuring that the completion times for both algorithms match that of the proposed algorithm. 
\begin{table}[h]
\centering
\caption{Comparison of Energy Consumption [kJ] \\ Across Different Algorithms and front Vehicle Speeds}
\label{table: energy eval simul}
\begin{tabular}{ |l||c|c|c|c|} 
 \hline
\diagbox[width=\dimexpr \textwidth/8+2\tabcolsep\relax]{Algorithm}{$v_k^{pv}$}&$2.5$m$/$s&$5.0$m$/$s&$7.5$m$/$s&$10$m$/$s\\
 \hline
 Cruise control&$200.26$&$213.94$&$263.45$&$326.84$\\
 Algorithm \cite{bae2019VIL}&$196.11$&$163.49$&$207.50$&$271.08$\\ 
 Proposed &$170.83$&$145.76$&$171.77$&$255.48$\\
 \hline
\end{tabular}
\end{table}\\
Compared to the cruise control algorithm, the proposed algorithm improves energy \textcolor{black}{efficiency} by an average of $25.8 \%$. Compared to the algorithm in \cite{bae2019VIL}, the proposed algorithm improves energy \textcolor{black}{efficiency} by an average of $11.7 \%$. 
It is noteworthy that as the flow speed increases, the energy consumption tends to increase for all algorithms.
\textcolor{black}{As speed increases, more kinetic energy is lost during vehicle deceleration, approximately following a quadratic function of the speed. Additionally, higher speeds result in greater energy loss due to air drag and rolling resistance.}
This result is aligned with \cite[Table 1]{sciarretta2015optimal} and justifies implementing traffic light crossing time constraints in Sec. \ref{sec: tlctc}.

Fig.\ref{fig:simresult} presents the simulation results when $v_k^{pv}=7.5$m$/$s. We compare the proposed algorithm with the algorithm in \cite{bae2019VIL}. 
A comparative analysis between the proposed algorithm and the algorithm in \cite{bae2019VIL} is conducted. 
While the high-level speed planner in \cite{bae2019VIL} computes speed references for the entire task horizon at each time step, for the sake of comparison, the planned speed reference in Fig.\ref{fig:simresult}.(b) is truncated. 

\begin{figure}[ht]
\begin{center}
\includegraphics[width=0.75\linewidth,keepaspectratio]{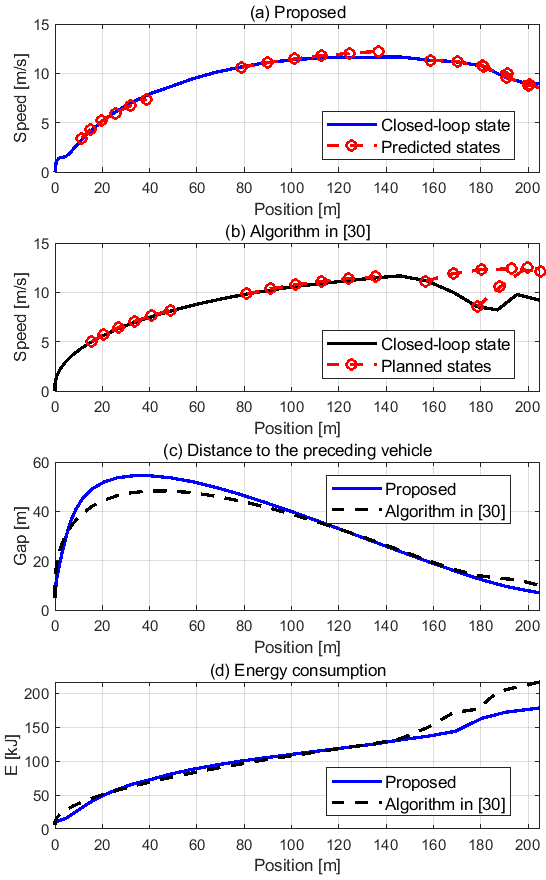}
\vspace{-1.0em}
\caption{Simulation results when $v_k^{pv}=7.5$m$/$s. (a) and (b): Vehicle speed histories: Comparison between Proposed Algorithm and Algorithm in \cite{bae2019VIL}. Closed-loop states and Predicted or Planned states at four different time steps. (c) Distance to front Vehicle: Comparison between two algorithms. (d) Energy Consumption: Comparison between two algorithms.}
\label{fig:simresult}
\end{center}
\end{figure}

As shown in Fig.\ref{fig:simresult}.(a), the predicted state of the proposed algorithm closely aligns with the closed-loop state. In particular, the last two predictions are difficult to distinguish as they overlap. 
In contrast, as shown in Fig.\ref{fig:simresult}.(b), the planned speed reference does not consistently match the closed-loop state. 
This disparity arises from the small distance to the front vehicle and the free-flow traffic assumption embedded within the speed planner in \cite{bae2019VIL}.
The distance to the front vehicle (or gap) is illustrated in Fig.\ref{fig:simresult}.(c).
When the gap is large, allowing the ego vehicle to drive in free-flow traffic conditions, the planned speed aligns with the closed-loop speed. However, when the gap becomes small, violating the free-flow assumption, the tracking controller in \cite{bae2019VIL} cannot track the planned speed reference with a small tracking error to avoid collision with the front vehicle.
This phenomenon does not happen as the collision avoidance constraints are considered when the terminal set $\mathcal{X}_f$ is computed as Algorithm \ref{alg: RNTW}.

As the objective of the tracking controller is not to minimize the energy consumption but to minimize tracking error and avoid collision with the front vehicle\cite{bae2019VIL}, the disparity between the planned and the closed-loop speed results in energy-inefficient motion.
As illustrated in Fig.\ref{fig:simresult}.(d), the energy consumption of the algorithm in \cite{bae2019VIL} becomes worse than that of the proposed algorithm when the gap decreases.

\subsection{Experiement setup and results}
We conduct a vehicle-in-the-loop experiment to investigate the performance of the proposed MPC \eqref{eq:mpc reform} and \eqref{eq: mpc simplified}. We compare the proposed algorithm with the algorithm in \cite{bae2019VIL}.
We utilized a retrofitted Hyundai Ioniq 5 as the test vehicle maneuvering around the actual testing site. We simulated four traffic lights and the front vehicle.
An overview of our test vehicle and scenario is presented in Fig. \ref{fig:test}.
\begin{figure}[ht]
\begin{center}
\includegraphics[width=0.75\linewidth,keepaspectratio]{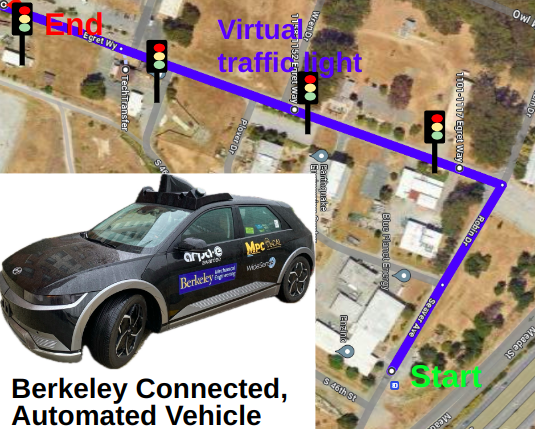}
\vspace{-1.0em}
\caption{Test vehicle and scenario}
\label{fig:test}
\end{center}
\end{figure}

For each scenario, traffic flow speed, light cycles, and locations were randomly selected. Once all relevant parameters, such as traffic flow speed, light cycles, and locations, are available, we utilize the high-level green wave search module described in Sec. \ref{sec: tlctc} to compute the time required to cross each traffic light while adhering to the prescribed traffic flow speed and rules governing traffic lights.
The parameters for each scenario are described in Table \ref{table: random scenarios}:
\begin{table}[ht]
\centering
\caption{Randomly sampled flow speed and traffic light locations, \\ and Computed time to cross the traffic lights}
\label{table: random scenarios}
\begin{tabular}{ |l|c|c|c|} 
 \hline
 {}&\makecell{Traffic flow \\ speed [m/s]}&\makecell{Traffic light \\ locations [m]} &\makecell{Time to cross\\ each traffic light[s]} \\
 \hline
 Scenario 1 &$4.312$&$[102, 245, 378, 484]$&$[24, 57, 88, 113]$\\
 Scenario 2 &$5.451$&$[126, 260, 381, 506]$&$[26, 48, 70, 94]$\\ 
 Scenario 3 &$3.433$&$[127, 227, 343, 450]$&$[37, 62, 95, 127]$\\
 \hline
\end{tabular}
\end{table}\\
In all scenarios, the front vehicle is positioned $5$m ahead and controlled using the cruise control algorithm to track the traffic flow speed. 
Moreover, the initial condition of the ego vehicle is idle speed, meaning no driver pedal inputs. Starting from idle speed, the vehicle needs to complete the given route under localization uncertainty while crossing each traffic light within the predetermined time and avoid collision with the front vehicle.


The terminal time and the regularization parameter in \cite[eq. (4)]{bae2019VIL} is tuned to show a similar travel time to the proposed algorithm.
After we conduct each test scenario, we evaluate the energy consumption defined in \eqref{eq: energy definition}.
The change of $E_\text{stor}$ is calculated using the equipped voltage and current sensors, while the change of $E_\text{kin}$ is calculated by a change in kinetic energy between the initial and the end states.

The speed profiles are presented in Fig.\ref{fig:energyconsumption_test}.
In all scenarios, the proposed algorithm crosses each traffic light without stopping, ensuring a consistent completion of the given route with minimal fluctuations, as depicted in Fig. \ref{fig:energyconsumption_test}.
In contrast, the algorithm \cite{bae2019VIL} exhibits occasional stops along the route.
Though the algorithm in \cite{bae2019VIL} ensures timely crossing of traffic lights during the planning phase, the tracking controller fails to follow the planned speed reference due to tracking error, time latency, and, most importantly, collision avoidance as described in Sec. \ref{sec: sim result}.
Consequently, when the vehicle controlled by the algorithm \cite{bae2019VIL} comes to a stop before the traffic light, it waits until the light changes to green. This delay may lead to the vehicle needing to accelerate to meet the predetermined time for crossing the next traffic light, resulting in energy-inefficient motion.

\begin{figure}[ht]
\begin{center}
\includegraphics[width=0.75\linewidth,keepaspectratio]{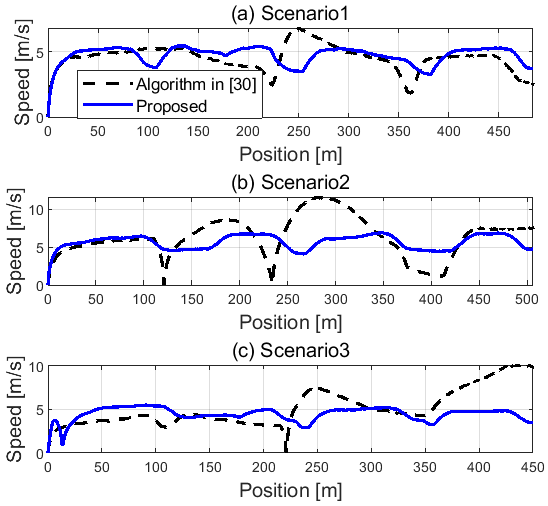}
\vspace{-1.0em}
\caption{Closed-loop speed trajectory of the ego vehicle.}
\label{fig:energyconsumption_test}
\end{center}
\end{figure}

The energy consumption results are described in Table \ref{table: energy eval}.
Compared to the algorithm in \cite{bae2019VIL}, the proposed algorithm shows an improvement of $19.4\%$ in average energy \textcolor{black}{efficiency}. 
In the best-case scenario (scenario 2), the proposed algorithm exhibits a $47.5\%$ improvement over the algorithm in \cite{bae2019VIL}, while in the worst-case scenario (scenario 1), it performs $1.5\%$ worse than the algorithm in \cite{bae2019VIL}. 
As illustrated in Fig. \ref{fig:energyconsumption_test}.(a), in scenario 1, the speed profiles of both the proposed algorithm and the algorithm in \cite{bae2019VIL} exhibit similarity, leading to comparable energy consumption results. However, as shown in Fig. \ref{fig:energyconsumption_test}.(b), in scenario 2, the vehicle controlled by the algorithm in \cite{bae2019VIL} experiences multiple stops, consequently leading to excessive energy consumption.
In contrast, the vehicle controlled by the proposed algorithm crosses the traffic light without stopping, resulting in energy-efficient motion.
Considering the predetermined time in Table. \ref{table: random scenarios}, the ego vehicle should cross the final traffic light within 113s, 94s, and 127s for scenarios 1, 2, and 3, respectively. The proposed algorithm successfully completes the route within the maximum travel time for all scenarios, while the algorithm in \cite{bae2019VIL} does not.
\begin{table}[h]
\centering
\caption{Closed-loop performance of algorithms. \\ $[ \text{Scenario 1}, \text{Scenario 2}, \text{Scenario 3} ]$ }
\label{table: energy eval}
\begin{tabular}{ |l||c|c|} 
 \hline
 Algorithm&Energy consumption[kJ]& \makecell{Travel time[s]} \\
 \hline
 Algorithm \cite{bae2019VIL} &$[192.2,328.0,214.1]$&$[114.8,134.7,127.6]$\\
 Proposed&$[195.1,171.1,188.1]$&$[106.6,92.7,109.6]$\\
 \hline
\end{tabular}
\end{table}

\section{\textcolor{black}{Limitation}}
\textcolor{black}{In this paper, we focus exclusively on longitudinal positioning uncertainty as we consider the longitudinal control of the vehicle. We model this uncertainty using an invariant probability distribution $p_w(w)$. However, various factors such as speed, road conditions, or sensor accuracy can influence longitudinal positioning uncertainty, suggesting the need for a distribution that accounts for these variables. Our current model does not consider these variations, leading to a limitation: it may not accurately represent uncertainty under different conditions. This limitation impacts the objective function, specifically the expected energy consumption outlined in Equation \eqref{eq:ftocp}. Consequently, the theory and approach used to design the controller might not fully capture the dynamics under varying conditions.}

\textcolor{black}{Additionally, the experimental setup in this study involves a single driving state for the front vehicle. Further testing is necessary under more dynamic conditions to validate the proposed method's robustness, safety, and energy efficiency. These conditions include scenarios where the front vehicle drives at variable speeds, performs cut-ins and cut-outs, and encounters sudden changes in driving intentions. Future research should include relevant experiments or a deeper discussion to address these concerns.}

\textcolor{black}{Another limitation this study acknowledges is the assumption of negligible V2I communication errors, with no packet loss or latency. This assumption is stated at the beginning of Sec. \ref{sec: problem setup} and reiterated here to highlight its impact on the controller's performance. In real-world applications, communication imperfections could affect the controller's effectiveness, necessitating further investigation and adjustments for these factors.}

\section{Conclusion and Future work}
This paper introduces a data-driven Model Predictive Control (MPC) framework tailored for energy-efficient urban road driving in the context of connected, automated vehicles while considering localization uncertainty. By addressing the challenges of longitudinal speed planning and control on roads with traffic lights and front vehicles, our proposed MPC offers a unified solution that minimizes total energy consumption while ensuring timely compliance with traffic laws and traffic light signals.

Incorporating a terminal cost function and constraints, learned from state-input data, enables effective energy optimization across the long task horizons while solving the proposed MPC with a short horizon. The proposed algorithm has been investigated via simulations and vehicle tests. The vehicle tests demonstrate the practical effectiveness of our approach, showcasing a significant $19.4\%$ improvement in average energy \textcolor{black}{efficiency} compared to conventional methods.

As an extension of the current work, we aim to enhance the robustness and adaptability of our system to accommodate the dynamic traffic environment, such as variations in traffic light. Moreover, we aim to delve into a multi-agent control for connected vehicles through V2V communication.

\section*{ACKNOWLEDGMENT}
This research work presented herein is funded
by the Advanced Research Projects Agency-Energy 
(ARPA-E), U.S. Department of Energy under DE-AR0000791. 

\bibliographystyle{IEEEtran}
\bibliography{IEEEabrv,reference}

\appendix
\subsection{Controller Implementation}
The proposed standard MPC \eqref{eq:mpc reform} is a convex optimization problem.
The stage cost is convex as $\mathbf{P} \succeq 0$ \eqref{eq: energy cost regression}. 
The system equation is linear, and state/input constraints \eqref{eq:constraints} are convex. 
$\mathcal{C}(\hat{s}^{pv}_{i|k}, \hat{v}^{pv}_{i|k})$ is a convex set from \eqref{eq: collision avoidance constraint}.
The set $\mathcal{W}$ is convex \eqref{eq: gps error}, and the sets $\mathcal{S}_{t_\text{red}}$ and $\mathcal{P}_{t_\text{green}}$ are convex polytopes as they are calculated via convex hull operation of a finite number of points as described in line 9 of Algorithm \ref{alg: RNTW}. 
Since the Pontryagin difference between two convex polytopes is convex, the fourth and the fifth constraints are convex.
The rest of the constraints are linear equalities and inequalities which are convex.
As the proposed standard MPC \eqref{eq:mpc reform} is a convex optimization problem, the simplified MPC \eqref{eq: mpc simplified} is also a convex optimization problem.
To implement the proposed MPC \eqref{eq:mpc reform} and \eqref{eq: mpc simplified}, we utilize CVXPY \cite{diamond2016cvxpy} as a modeling language and MOSEK as a solver \cite{mosek}.
For the vehicle test, we implement the proposed algorithm in ROS2 \cite{ros2}.

\subsection{Proof of Theorem \ref{thm1}} \label{app. thm}
All data points in $\mathbf{X}$ and $\mathbf{U}$ satisfy the constraints \eqref{eq:constraints} by construction. We prove the rest of the claim by induction

For $p=0$, by definition of the robust controllable set \cite[Def.10.18]{borrelli2017predictive}, $\mathcal{R}_0 =\mathcal{G} \medcap \mathcal{C}(\hat{s}_{t+N|k}^{pv},\hat{v}_{t+N|k}^{pv})$ is a $0$-step robust controllable set of the system \eqref{eq: observer equation} perturbed by the noise \eqref{eq: lumped noise} for the target convex set $\mathcal{G}$ subject to the constraints \eqref{eq:constraints} and the collision avoidance constraint \eqref{eq: collision avoidance constraint} given the predicted behavior of the front vehicle $\hat{s}_{t+N|k}^{pv},\hat{v}_{t+N|k}^{pv}$.

Suppose that for some $p \geq 0$, $\mathcal{R}_p$, declared in line 11 at iteration $p-1$, is an $p$-step robust controllable set of the system \eqref{eq: observer equation} perturbed by the noise \eqref{eq: lumped noise} for the target convex set $\mathcal{G}$ subject to the constraints \eqref{eq:constraints} and \eqref{eq: collision avoidance constraint} given the predicted behavior of the front vehicle $\{\hat{s}_{i|k}^{pv}\}_{i=t+N-p}^{t+N},\{\hat{v}_{i|k}^{pv}\}_{i=t+N-p}^{t+N}$.

At line 5, we find data points such that the state $[\mathbf{X}]_j$ of the system \eqref{eq: observer equation} can be robustly steered to $\mathcal{R}_i$. 
subsequently, at line $6$, the collision avoidance satisfaction is checked at the state $[\mathbf{X}]_j$.
Let $\{(\mathbf{\Tilde{x}}_s, ~\Tilde{u}_s)\}_{s=0}^{N_\text{s}}$ denote a set of the data points that satisfy both conditions at line 5 and 6. 
Then, by definitions \cite[Def.10.15 \& 18]{borrelli2017predictive}, $\{\mathbf{\Tilde{x}}_s\}_{s=0}^{N_\text{s}}$ are elements of the $p+1$-step robust controllable set. 

Now, we prove convex combinations of $\{\mathbf{\Tilde{x}}_s\}_{s=0}^{N_\text{s}}$ are also elements of the $p+1$-step robust controllable set.
First, we show that convex combinations of data points $\{(\mathbf{\Tilde{x}}_s, ~\Tilde{u}_s)\}_{s=0}^{N_\text{s}}$ satisfy the constraints \eqref{eq:constraints}. This is true because $\mathcal{X}$ and $\mathcal{U}$ are convex sets.
Second, we prove that convex combinations of data points $\{\mathbf{\Tilde{x}}_s \}_{s=0}^{N_\text{s}}$ satisfy the condition at line 6, i.e., the given collision avoidance constraint $\mathcal{C}(\hat{s}_{t+N-p-1|k}^{pv}, \hat{v}_{t+N-p-1|k}^{pv})$. This is also true because the given collision avoidance constraint represents a halfspace, which is convex \cite{boyd2004convex}.
Third, we prove that convex combinations of $\{\mathbf{\Tilde{x}}_s\}_{s=0}^{N_\text{s}}$ can be robustly steered to $\mathcal{R}_p$.
$\mathcal{R}_p$ is convex as it was constructed by the convex hull operation in line 11 at iteration $p-1$. The following holds for all $0 \leq \lambda \leq 1$:
\begin{equation*}
    \begin{aligned}
        & ~~~~~~~~\forall s_1, s_2 \in \{0,\cdots,N_\text{s}\},\\
        & ~~~~~~~~\mathbf{A}\mathbf{\Tilde{x}}_{s_1} + \mathbf{B}\Tilde{u}_{s_1} + \mathbf{D}n_{s_1} \in \mathcal{R}_p, \forall n_{s_1} \in 2L\mathcal{W},\\
        & ~~~~~~~~\mathbf{A}\mathbf{\Tilde{x}}_{s_2} + \mathbf{B}\Tilde{u}_{s_2} + \mathbf{D}n_{s_2} \in \mathcal{R}_p, \forall n_{s_2} \in 2L\mathcal{W}.\\
        & \implies \forall s_1, s_2 \in \{0,\cdots,N_\text{s}\}, \forall n_{s_1}, n_{s_2}  \in 2L\mathcal{W}, \\
        & ~~~~~~~~ \mathbf{\Tilde{x}}_c \coloneqq \lambda\mathbf{\Tilde{x}}_{s_1} + (1-\lambda)\mathbf{\Tilde{x}}_{s_2},\\
        & ~~~~~~~~ \mathbf{\Tilde{u}}_c \coloneqq \lambda\mathbf{\Tilde{u}}_{s_1} + (1-\lambda)\mathbf{\Tilde{u}}_{s_2}, \\
        & ~~~~~~~~ n_{c} \coloneqq \lambda n_{s_1} + (1-\lambda)n_{s_2}, \\
        & ~~~~~~~~ \mathbf{A}\mathbf{\Tilde{x}}_c + \mathbf{B}\mathbf{\Tilde{u}}_c + \mathbf{D}n_{c} \in \mathcal{R}_p ~(\because \mathcal{R}_p \text{ is convex.}).
    \end{aligned}
\end{equation*}
Moreover, as $\mathcal{W}$ is a convex set, $n_{c}$ belongs to the set $2L\mathcal{W}$ and can represent all realizations of noise in $2L\mathcal{W}$.
Thus, all convex combinations of $\{\mathbf{\Tilde{x}}_s\}_{s=0}^{N_\text{s}}$ are also elements of the $p+1$-step robust controllable set. Therefore, $\mathcal{R}_{p+1}$, constructed in line 11 by convex hull operation on the set $\mathbb{X} = \textbf{A}$, is an $i+1$-step robust controllable set when $\mathcal{R}_p$ is an $p$-step robust controllable set.
By induction, the claim is proved.

\subsection{Constraint reformulation} \label{app. constraint reform}
From the dynamics of the estimated and the nominal state in \eqref{eq:MPC} and \eqref{eq: lumped noise}, we have that:
\begin{equation*}
\begin{aligned}
    & \mathbf{e}_{i|k} = \hat{\mathbf{x}}_{i|k} - \bar{\mathbf{x}}_{i|k}, \\
    & \mathbf{e}_{i+1|k} = \mathbf{A}\mathbf{e}_{i|k} + \mathbf{D}n_i, ~n_i \sim p_n(n).
\end{aligned}
\end{equation*}
From the initial constraint in \eqref{eq:MPC}, $\mathbf{e}_{0|k} = 0$.
Moreover, $\mathbf{A}^i \mathbf{D} = \mathbf{D}$ by calculation.
Thus, for all $i \geq 1$, we have that:
\begin{equation} \label{eq: error nominal estimate}
\begin{aligned}
    & \mathbf{e}_{i|k} = \sum_{k=0}^{i-1} \mathbf{A}^k \mathbf{D}n_k = \mathbf{D} \sum_{k=0}^{i-1}n_k.
\end{aligned}
\end{equation}
From \eqref{eq: lumped noise}, we know that $n_k \in 2L\mathcal{W}, ~\forall k \geq 0$. Thus, for all $i \geq 1$, we have that:
\begin{equation} \label{eq: error in 2lidw}
\begin{aligned}
    & \mathbf{e}_{i|k} = \mathbf{D} \sum_{k=0}^{i-1}n_k \in 2Li\mathbf{D}\mathcal{W}.
\end{aligned}
\end{equation}
Based on the result in \eqref{eq: error in 2lidw}, the constraints in \eqref{eq:MPC} is reformulated with respect to the nominal state.

First, we consider the state and input constraints in \eqref{eq:MPC}. These constraints do not depend on the noisy position by definition in \eqref{eq:constraints}. Moreover, the speed of the nominal state propagated from $\bar{\mathbf{x}}_{0|k}=\hat{\mathbf{x}}_k$ is identical to that of the estimated state. Thus, we have that:
\begin{equation*}
\begin{aligned}
    & \bar{\mathbf{x}}_{i|k} \in \mathcal{X}, ~ u_{i|k} \in \mathcal{U} \iff \hat{\mathbf{x}}_{i|k} \in \mathcal{X}, ~ u_k \in \mathcal{U}.
\end{aligned}
\end{equation*}
Second, we reformulate the collision avoidance constraints, the terminal constraints, and the traffic light constraints in \eqref{eq:MPC}. From \eqref{eq: error in 2lidw}, the following holds for any given set $\mathcal{S}$:
\begin{equation}
    \bar{\mathbf{x}}_{i|k} \in \mathcal{S}\ominus 2Li\mathbf{D} \mathcal{W} \implies \hat{\mathbf{x}}_{i|k} \in \mathcal{S}.
\end{equation}
Thus, the following hold for all $n_k \in 2L\mathcal{W}$:
\begin{equation} \label{eq: constraint reform result}
\begin{aligned}
    & \bar{\mathbf{x}}_{i|k} \in \mathcal{C}(\hat{s}_{i|k}^{pv},\hat{v}_{i|k}^{pv})\ominus 2Li\mathbf{D}\mathcal{W} \implies \hat{\mathbf{x}}_{i|k} \in \mathcal{C}(\hat{s}_{i|k}^{pv},\hat{v}_{i|k}^{pv}), \\
    & \bar{\mathbf{x}}_{N|k} \in \mathcal{X}_f\ominus 2LN\mathbf{D}\mathcal{W} \implies \hat{\mathbf{x}}_{N|k} \in \mathcal{X}_f, \\
    & \bar{\mathbf{x}}_{i|k} \in \mathcal{L}(s_\mathrm{tl})\ominus(2Li+1)\mathbf{D}\mathcal{W} \implies \hat{\mathbf{x}}_{i|k} \in \mathcal{L}(s_\mathrm{tl})\ominus\mathbf{D}\mathcal{W}.
\end{aligned}
\end{equation}

\subsection{Integration of \eqref{eq: initial value function} to \eqref{eq:mpc reform}} \label{app: integration MPC}
The minimization problem \eqref{eq: initial value function} can be incorporated into the problem \eqref{eq:mpc reform} and form a single unified optimization problem as follows:
\begin{equation} \label{eq: mpc unified}
\begin{aligned}
    & \Tilde{J}_\text{MPC} (\hat{\mathbf{x}}_{k}, \{\hat{s}^{pv}_{i|k}\}_{i=0}^{N}, \{\hat{v}^{pv}_{i|k}\}_{i=0}^{N}) = \\
    & \min_{\substack{u_{0:N-1|k},\\ \mathbf{s}_f, \\\bm{\lambda}_{1:M},\\ \bm{\lambda}^\mathcal{O}_{1:M}}} \sum_{i=0}^{N-1} \ell(\bar{\mathbf{x}}_{i|k}, ~ u_{i|k}) +\sum_{m=1}^M p_m V_m +  M_f \mathbf{s}_f\\
    & ~~~~ \textnormal{s.t.,} \, ~~ \bar{\mathbf{x}}_{i+1|k} =\mathbf{A} \bar{\mathbf{x}}_{i|k} + \mathbf{B} u_{i|k} \\
    & \qquad ~~~~~  \bar{\mathbf{x}}_{0|k} = \hat{\mathbf{x}}_{k}, \\
    & \qquad ~~~~~  \bar{\mathbf{x}}_{i|k} \in \mathcal{X}, ~ u_{i|k} \in \mathcal{U},\\
    & \qquad ~~~~~  \bar{\mathbf{x}}_{i|k} \in \mathcal{C}(\hat{s}^{pv}_{i|k}, \hat{v}^{pv}_{i|k}) \ominus 2Li\mathbf{D}\mathcal{W} \\
    & \qquad ~~~~~  \bar{\mathbf{x}}_{N|k} \in (\mathcal{S}_{t_\text{red}} \medcap \mathcal{P}_{t_\text{green}}) \ominus 2LN\mathbf{D}\mathcal{W}\oplus \mathbf{s}_f,\\
    & \qquad ~~~~~  \bar{\mathbf{x}}_{i|k} \in \mathcal{L}(s_\mathrm{tl}) \ominus (2Li+1)\mathbf{D}\mathcal{W}, \text{if} \, c_{l^\star,k+i}=\text{red}, \\
    & \qquad ~~~~~  \mathbf{s}_f \geq 0 \\
    &\qquad ~~~~~  V_m = \mathbf{J} \bm{\lambda}_m + \mathbf{J}_\mathcal{O} \bm{\lambda}^\mathcal{O}_m,\\
    &\qquad ~~~~~  \mathbf{X} \bm{\lambda}_m + \mathbf{V}_\mathcal{O} \bm{\lambda}^\mathcal{O}_m = \bar{\mathbf{x}}_{N|k} + \mathbf{v}_N^m - \mathbf{s}_\mathrm{tl}\\
    & \qquad ~~~~~  \bm{\lambda}_m \geq \bm{0}, ~ \bm{1}^\top \bm{\lambda}_m = 1, \\
    & \qquad ~~~~~  \bm{\lambda}^\mathcal{O}_m \geq \bm{0}, \,\, \bm{1}^\top \bm{\lambda}^\mathcal{O}_m = 1 \\
    & \qquad ~~~~~  i \in \{0,\cdots,N-1\}, ~  m \in \{1,\cdots,M\}.
\end{aligned}
\end{equation}

\subsection{Proof of Theorem \ref{thm3}} \label{app. thm3}
If $k+N \leq T_{l^\star}$, the MPC controller in \eqref{eq:mpc reform} is employed. Let $u_{0:N-1|k}^\star$ denote the optimal input sequence of \eqref{eq:mpc reform}. From the terminal constraint and \eqref{eq: constraint reform result}, the system \eqref{eq: observer equation} controlled by $u_{0:N-1|k}^\star$ robustly steers to the terminal set $\mathcal{X}_f = \mathcal{S}_{t_\text{red}} \medcap \mathcal{P}_{t_\text{green}}$ from $\hat{\mathbf{x}}_k$, namely:
\begin{equation}
    \hat{\mathbf{x}}_{N|k} \in \mathcal{X}_f = \mathcal{S}_{t_\text{red}} \medcap \mathcal{P}_{t_\text{green}}, ~\forall n_k \in 2L\mathcal{W}.
\end{equation}
As $\mathcal{P}_{t_\text{green}}$ is a $t_\text{green}$-step robust controllable set to the target set $\hat{\mathcal{G}}_p$, there exists a control input sequence that robustly steers to $\hat{\mathcal{G}}_p$ from $\hat{\mathbf{x}}_{N|k}$, i.e., $\hat{\mathbf{x}}_{N+t_\text{green}|k} \in \hat{\mathcal{G}}_p$. Moreover, as $\hat{\mathcal{G}}_p={\mathcal{G}}_p\ominus\mathbf{D}\mathcal{W}$ described in \textbf{Remark 8}, $\hat{\mathbf{x}}_{N+t_\text{green}|k} \in \hat{\mathcal{G}}_p \implies {\mathbf{x}}_{N+t_\text{green}|k} \in {\mathcal{G}}_p$, which represents that the system \eqref{eq:system} robustly steers to ${\mathcal{G}}_p$, the region after the upcoming traffic light. This proves the claim for $k+N \leq T_{l^\star}$.

If $k+N > T_{l^\star}$, the MPC controller in \eqref{eq: mpc simplified} is employed. Let $u_{0:N_s-1|k}^\star$ denote the optimal input sequence of \eqref{eq: mpc simplified}. From the terminal constraint and \eqref{eq: constraint reform result}, the system \eqref{eq: observer equation} controlled by $u_{0:N_s-1|k}^\star$ robustly steers to the terminal set $\hat{\mathcal{G}}_p$. Similar to the previous case, this implies that the system \eqref{eq:system} robustly steers to ${\mathcal{G}}_p$, the region after the upcoming traffic light. This proves the claim for $k+N > T_{l^\star}$.

The claim is proved in both cases.

\end{document}